\begin{document}

\preprint{}

\title{Detectors for local discrimination of sets of generalized Bell states}

\author{Cai-Hong Wang$^{1}$}%
\author{Jiang-Tao Yuan$^{1}$}%
\email[]{jtyuan@cwxu.edu.cn}
\author{Ying-Hui Yang$^{2}$}
\email[]{yangyinghui4149@163.com}
\author{Mao-Sheng Li$^{3}$}
\email[]{li.maosheng.math@gmail.com}
\author{Shao-Ming Fei$^{4}$}
\email[]{feishm@cnu.edu.cn}
\author{Zhi-Hao Ma$^{5}$}
\email[]{mazhihao@sjtu.edu.cn}
\affiliation{%
$^{1}$Department of General Education, Wuxi University, Wuxi, 214105, China\\
$^{2}$School of Mathematics and Information Science, Henan Polytechnic University, Jiaozuo, 454000, China\\
$^{3}$School  of Mathematics, South China University of Technology, GuangZhou 510640, China\\
$^{4}$School of Mathematical Sciences, Capital Normal University, Beijing 100048, China\\
$^{5}$School of Mathematical Sciences, MOE-LSC, Shanghai Jiao Tong University, Shanghai, 200240, China;\\
Shanghai Seres Information Technology Co., Ltd., Shanghai, 200040, China;
and Shenzhen Institute for Quantum Science and Engineering, Southern University of Science and Technology, Shenzhen, 518055, China}%



\begin{abstract}
A fundamental problem in quantum information processing is the discrimination among a set of orthogonal quantum states of a composite system under local operations and classical communication (LOCC). Corresponding to the LOCC indistinguishable sets of four ququad-ququad orthogonal maximally entangled states (MESs) constructed by Yu et al. [Phys. Rev. Lett. 109, 020506 (2012)], the maximum commutative sets (MCSs) were introduced as detectors for the local distinguishability of the set of generalized Bell states (GBSs), for which the detectors are sufficient to determine the LOCC distinguishability. In this work, we show how to determine all the detectors for a given GBS set.
We construct also several 4-GBS sets without detectors, most of which are one-way LOCC indistinguishable and only one is one-way LOCC distinguishable,
indicating that the detectors are not necessary for LOCC  distinguishability. Furthermore, we show that for 4-GBS sets in quantum system $\mathbb{C}^{6}\otimes\mathbb{C}^{6}$, the detectors are almost necessary for one-way LOCC distinguishability, except for one set in the sense of local unitary equivalence. The problem of one-way LOCC discrimination of 4-GBS sets in $\mathbb{C}^{6}\otimes\mathbb{C}^{6}$ is completely resolved.
\end{abstract}

\pacs{03.65.Ud, 03.67.Hk}
\maketitle


\section{Introduction}
As a fundamental task in quantum information processing, quantum state discrimination identifies a quantum state chosen randomly from a known set of quantum states via a positive operator-valued measure (POVM). It is well known that any set of orthogonal states can be perfectly discriminated globally via a POVM [1]. However, a set of $l(\ge 3)$ orthogonal bipartite states in composite system $\mathbb{C}^{d}\otimes\mathbb{C}^{d}$ is generally indistinguishable under local operations and classical communication (LOCC) \cite{benn1999pra,walg2000prl,walg2002prl,gho2001prl}.
Such LOCC indistinguishable sets present some kind of nonlocality in the sense that
more quantum information could be inferred from global measurements than that from local operations [2].

It has been shown that any two multipartite orthogonal pure states are LOCC distinguishable \cite{walg2000prl}. A complete set of orthogonal maximally entangled states (MESs) is LOCC indistinguishable both deterministically or probabilistically, and $d+1$ or more MESs in $\mathbb{C}^{d}\otimes\mathbb{C}^{d}$ are LOCC indistinguishable \cite{horo2003prl,fan2004prl,fan2007pra,gho2004pra,nath2005jmp}, while any three generalized Bell states (GBSs) in $\mathbb{C}^{d}\otimes\mathbb{C}^{d}\ (d\geq 3)$ are always LOCC distinguishable \cite{wang2017qip,alber2000arx}. Bennett et al. \cite{benn1999pra} presented the first example of a complete set of orthogonal product states in $\mathbb{C}^{3}\otimes\mathbb{C}^{3}$ that are LOCC indistinguishable, which revealed the phenomenon of quantum nonlocality without entanglement.
Yu et al. \cite{yu2012prl} exhibited the first example of four ququad-ququad orthogonal MESs that are LOCC indistinguishable.
The local discrimination of quantum states has wide applications in data hiding, quantum secret sharing and quantum private query \cite{divi2002trans-inform,raha2015pra,yang2015sr,wei2016pra,wang2016pra,wei2018trans-comp}.

For GBS sets in $\mathbb{C}^{d}\otimes\mathbb{C}^{d}$ with cardinality $l$ ($2\leq l\leq d$) \cite{band2011njp}, many examples of LOCC indistinguishable GBS sets have been obtained
\cite{band2011njp,yu2012prl,zhang2015pra,wang2016qip,yang2018pra,yuan2019qip}. For example, Bandyopadhyay et al. \cite{band2011njp} presented the one-way LOCC indistinguishable 4-GBS sets in $\mathbb{C}^{4}\otimes\mathbb{C}^{4}$ and $\mathbb{C}^{5}\otimes\mathbb{C}^{5}$.
Some sufficient conditions on local discrimination of GBSs are also obtained
\cite{fan2004prl,tian2015pra,wang2019pra,Li20,yang2021qip}.
Fan \cite{fan2004prl} showed that any $l$ generalized Bell states (GBSs) in $\mathbb{C}^{d}\otimes\mathbb{C}^{d}$ are LOCC distinguishable
provided that $(l-1)l\leq 2d$ and $d$ is a prime number, which was extended by Tian et al. to the prime-power-dimensional quantum systems for mutually commuting qudit lattice states \cite{tian2015pra}.

In \cite{gho2004pra,band2011njp,zhang2015pra,nathan2013pra}, Bandyopadhyay et al. found the
necessary and sufficient conditions for one-way local discrimination of GBS sets, although
the conditions seem to be difficult to verify for a general GBS set. By utilizing the local unitary equivalence (LU-equivalence) classification \cite{tian2016pra,wang-yuan2021jmp} and the discriminant set, we provided necessary and sufficient conditions for local discrimination of GBS sets in $\mathbb{C}^{4}\otimes\mathbb{C}^{4}$ and $\mathbb{C}^{5}\otimes\mathbb{C}^{5}$ \cite{yuan2022quantum}, which completely solved the problem of local discrimination of GBS sets in $\mathbb{C}^{4}\otimes\mathbb{C}^{4}$. Recently, Li et al. \cite{li2022pra} introduced the maximum commutative sets (MCSs) as detectors for the local distinguishability.
They demonstrated that the detectors are more powerful than the discriminant set, that is,
if the local distinguishability of a GBS set can be determined by the discriminant set, then it can also be determined by the detectors. Therefore, detectors are crucial for solving the problem of local distinguishability. It is of significance to determine all the detectors of a given GBS set.

In this paper, we study the detectors of a given GBS set in $\mathbb{C}^{d}\otimes\mathbb{C}^{d}$. We show how to identify all the detectors capable of detecting the local distinguishability. For even $d$ $(d\geq 4)$, we construct a LOCC distinguishable 4-GBS set in $\mathbb{C}^{d}\otimes\mathbb{C}^{d}$
without detectors, indicating that the detectors are not necessary for local distinguishability. The 4-GBS set is neither an F-equivalent set \cite{hashi2021pra}, indicating that a LOCC distinguishable set may not necessarily be F-equivalent.
Finally, we demonstrate that for 4-GBS sets in $\mathbb{C}^{6}\otimes\mathbb{C}^{6}$,
the detectors are almost necessary for determining the local distinguishability.
Namely, having a detector is almost necessary and sufficient for the local distinguishability in this case.

The rest of this paper is organized as follows. In Section II, we recall some relevant notions and results. In Section III, we show how to find all detectors that determine the local distinguishability of GBS sets in $\mathbb{C}^{d}\otimes\mathbb{C}^{d}$. In Section IV, we provide several 4-GBS sets without detectors, only one of which is LOCC distinguishable,
showing that the detectors are not necessary for LOCC distinguishable sets. In Section V, for 4-GBS sets in $\mathbb{C}^{6}\otimes\mathbb{C}^{6}$, we show that having a detector is almost necessary for the GBS set to be LOCC distinguishable. We conclude in Section VI.

\section{Preliminaries}

\theoremstyle{remark}
\newtheorem{definition}{\indent Definition}
\newtheorem{lemma}{\indent Lemma}
\newtheorem{theorem}{\indent Theorem}
\newtheorem{corollary}{\indent Corollary}
\newtheorem{example}{\indent Example}
\newtheorem{proposition}{\indent Proposition}
\newtheorem{problem}{\indent Problem}
\newtheorem{conjecture}{\indent Conjecture}
\newtheorem{observation}{\indent Observation}

\def\QEDclosed{\mbox{\rule[0pt]{1.3ex}{1.3ex}}}
\def\QED{\QEDclosed}
\def\proof{\indent{\em Proof}.}
\def\endproof{\hspace*{\fill}~\QED\par\endtrivlist\unskip}

Consider a $d$-dimensional Hilbert space with computational basis $\{|j\rangle\}_{j=0}^{d-1}$.
Let $U_{m,n}=X^{m}Z^{n}$, $m,n\in\mathbb{Z}_{d}$, be the generalized Pauli matrices (GPMs)
constituting a basis of unitary operators, where $X|j\rangle=|j+1$ mod $d\rangle$, $Z|j\rangle=\omega^{j}|j\rangle$, $\omega=e^{2\pi i/d}$ and $\mathbb{Z}_{d}=\{0,1,\ldots,d-1\}$. The canonical maximally entangled state $|\Phi\rangle$ in $\mathbb{C}^{d}\otimes\mathbb{C}^{d}$ is given by $|\Phi\rangle=(1/\sqrt{d})\sum_{j=0}^{d-1}|jj\rangle$.
It is known that $(I\otimes U)|\Phi\rangle=(U^{T}\otimes I)|\Phi\rangle$, where $T$ means matrix transposition and $U$ is unitary. Any MES can be written as $|\Psi\rangle=(I\otimes U)|\Phi\rangle$. If $U=X^{m}Z^{n}$, the states
\begin{eqnarray*}
|\Phi_{m,n}\rangle=(I\otimes X^{m}Z^{n})|\Phi\rangle
\end{eqnarray*}
are called generalized Bell states (GBSs). Note that there is a one-to-one correspondence between the MESs and the unitaries, and the GBSs are given by the GPMs. It is convenient to denote a GBS set $\{(I\otimes X^{m_i}Z^{n_i})|\Phi\rangle \}$ by $\{X^{m_i}Z^{n_i} \}$ or $\{  (m_i, n_i) \}$. We denote $\mathcal{S}:=\{ |\Phi_{m_i,n_i}\rangle\}=\{  X^{m_i}Z^{n_i} \}=\{  (m_i, n_i) \}$ in the following without confusion. For a given $l$-GBS set $\mathcal{S}=\{X^{m_{i}}Z^{n_{i}}|m_{i}, n_{i}\in\mathbb{Z}_{d}, 1\leq i\leq l\}$ with $2\le l\le d$,
the difference set $\Delta \mathcal{S}$ of $\mathcal{S}$ is defined by
\begin{eqnarray*}
&\Delta \mathcal{S}=\{U_{j}U_{k}^{\dag}|U_{j}, U_{k}\in \mathcal{S}, j\neq k\}.
\end{eqnarray*}
Up to a phase, we can identify $\Delta \mathcal{S}$ as the set
\begin{eqnarray*}
\{(m_{j}-m_{k},n_{j}-n_{k})|(m_j, n_j), (m_k, n_k)\in \mathcal{S}, j\neq k\}.
\end{eqnarray*}

According to the definitions of $l$-GBS set $\mathcal{S}$ and its difference set  $\Delta\mathcal{S}$, each element $U_{j}U_{k}^{\dag}$ in $\Delta\mathcal{S}$ are determined by two different GBSs $U_{j}$ and $U_{k}$ in $\mathcal{S}$, so $\Delta\mathcal{S}$ contains at most $l(l-1)$ GPMs.
In the set $\Delta \mathcal{S}$, we use the GPM $(m_{j}-m_{k},n_{j}-n_{k})$ to refer to the element $U_{j}U_{k}^{\dag}$ uniformly, and it is easy to check that this does not change the commutation relationship between elements in $\Delta\mathcal{S}$.

\subsection{F-equivalence and discriminant set}

A set $\mathcal{S}$ is one-way LOCC distinguishable if all $m_{i}$ $(i = 1,..., l)$ are distinct \cite{fan2004prl}. Such sets are called F type \cite{hashi2021pra}. A GBS set $\mathcal{W}$ is called F-equivalent if $\mathcal{W}$ can be transformed to a F-type set under
local unitary operations.
\begin{lemma}\label{th2.1}\cite{hashi2021pra}
An $l$-GBS set $\mathcal{S}$ in $\mathbb{C}^{d}\otimes\mathbb{C}^{d}$ is F-equivalent, and thus one-way LOCC distinguishable, if and only if $m_{i}\alpha + n_{i}\beta$ $(i = 1,...,l)$ are all distinct modulo $d$ for some integers $\alpha$ and $\beta$. Especially a $d$-GBS set $\mathcal{S}$ in $\mathbb{C}^{d}\otimes\mathbb{C}^{d}$ is LOCC distinguishable if and only if $\mathcal{S}$ is F-equivalent for prime $d$.
\end{lemma}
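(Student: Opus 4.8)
The plan is to work entirely with the label set $\{(m_i,n_i)\}\subseteq\mathbb Z_d^{2}$, to describe the local unitaries relating GBS sets via the single‑qudit Clifford action, and to reduce the prime‑dimensional statement to the known LOCC‑impossibility criteria for maximally entangled states. The easy implications come for free: an F‑type set is one‑way LOCC distinguishable \cite{fan2004prl} and one‑way distinguishability is invariant under local unitaries, so F‑equivalence implies one‑way LOCC distinguishability --- this gives the ``thus'' in the first sentence and the ``if'' part of the second. The substance is the characterization, and for this I would first pin down the local unitaries that send GBS sets to GBS sets. From the relation $(I\otimes U)|\Phi\rangle=(U^{T}\otimes I)|\Phi\rangle$ recalled above one gets $(A\otimes B)(I\otimes U)|\Phi\rangle=(I\otimes BUA^{T})|\Phi\rangle$; choosing $B=V$ and $A=\bar V$ for a single‑qudit Clifford unitary $V$ yields $BUA^{T}=VUV^{\dagger}$, which conjugates $X^{m}Z^{n}$ to a phase times $X^{m'}Z^{n'}$ with $(m',n')^{T}=M(m,n)^{T}$ for $M\in SL(2,\mathbb Z_d)$ the symplectic matrix of $V$, while $(I\otimes X^{a}Z^{b})$ induces a common translation of the labels. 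So these local unitaries act on the label set through the group $\mathbb Z_d^{2}\rtimes SL(2,\mathbb Z_d)$, augmented by the reflection $(m,n)\mapsto(m,-n)$ coming from conjugating all states.

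With this in hand the characterization is short. For ``$\Leftarrow$'', assume $\alpha m_i+\beta n_i$ are all distinct modulo $d$. First reduce to a primitive pair: writing $t=\gcd(\alpha,\beta)$, $\alpha=t\alpha_0$, $\beta=t\beta_0$, distinctness of $t(\alpha_0m_i+\beta_0n_i)$ modulo $d$ forces distinctness of $\alpha_0m_i+\beta_0n_i$ modulo $d$, so $(\alpha,\beta)$ may be taken as the first row of some $M\in SL(2,\mathbb Z_d)$; applying the corresponding $\bar V\otimes V$ turns the first coordinates of the label set into $\alpha m_i+\beta n_i$, which are distinct, so the image is F‑type and $\mathcal S$ is F‑equivalent. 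For ``$\Rightarrow$'', if $\mathcal S$ is LU‑equivalent to an F‑type set then, by the LU classification of GBS sets \cite{tian2016pra,wang-yuan2021jmp}, the LU operation acts on labels through the group above; reading the first row of the relevant $M$ as $(\alpha,\beta)$, the first coordinates of the F‑type image are $\alpha m_i+\beta n_i+\text{const}$ (the translation and reflection do not affect injectivity), hence these are all distinct modulo $d$.

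The genuinely hard part is the prime‑dimensional converse, which I would prove by contraposition: a non‑F‑equivalent $d$‑GBS set is LOCC indistinguishable. Because $d$ is prime, two GPMs commute precisely when their labels are parallel, so the maximal commutative classes are exactly the $d+1$ lines through the origin of $\mathbb Z_d^{2}$ (one per MUB); and, by the characterization just proved, F‑equivalence of a label set of cardinality $d$ is equivalent to $\Delta\mathcal S$ omitting the direction of at least one of these lines. Hence ``not F‑equivalent'' means $\Delta\mathcal S$ meets every one of the $d+1$ maximal commutative classes, and one must show this blocks even fully adaptive LOCC discrimination. For a one‑way protocol this reduces to the non‑existence of an orthonormal basis $\{|e_k\rangle\}$ with $\langle e_k|U|e_k\rangle=0$ for every $U\in\Delta\mathcal S$, which fails once $\Delta\mathcal S$ hits all $d+1$ directions; the fully adaptive case is then excluded using the necessary conditions for LOCC discrimination of maximally entangled states from \cite{gho2004pra,band2011njp,zhang2015pra,nathan2013pra}. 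Carrying out this last step --- matching the combinatorics of $\Delta\mathcal S$ against those analytic (SDP‑type) obstructions --- is the main obstacle, and it is exactly here that both primality (so that commutation equals parallelism, giving precisely $d+1$ classes) and the cardinality being exactly $d$ enter in an essential way; for composite $d$, or for fewer than $d$ states, saturating the commutative structure in this manner no longer forbids local discrimination, as the explicit constructions in the later sections of this paper show.
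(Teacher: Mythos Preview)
The paper does not prove this lemma at all: it is stated as Lemma~1 with a citation to \cite{hashi2021pra} and used as a black box (the only related remark is the observation, after Lemma~3, that the $(\alpha,\beta)$-condition is equivalent to $\mathcal D(\mathcal S)\neq\emptyset$). So there is no ``paper's own proof'' to compare your proposal against; what follows is an assessment of your argument on its own.

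Your ``$\Leftarrow$'' direction for the characterization is fine: the gcd reduction to a primitive pair is correct (if $t a_i$ are pairwise distinct mod $d$ then so are the $a_i$), a primitive integer row $(\alpha_0,\beta_0)$ can be completed to a matrix in $SL(2,\mathbb Z)$ and hence in $SL(2,\mathbb Z_d)$, and the corresponding single-qudit Clifford exists for every $d$. Your ``$\Rightarrow$'' direction, however, leans on a statement you have not justified in the needed generality: that every local unitary carrying one GBS set to another acts on labels through $\mathbb Z_d^{2}\rtimes SL(2,\mathbb Z_d)$ together with the reflection. The references you invoke, \cite{tian2016pra,wang-yuan2021jmp}, establish LU classification only for $d=4$ and $d=5$, not for arbitrary $d$; without that general fact your extraction of $(\alpha,\beta)$ from the LU map is unsupported. (In \cite{hashi2021pra} the authors bypass this by working with a restricted notion of equivalence generated by explicit Clifford moves, which makes the ``$\Rightarrow$'' direction immediate; if you adopt that definition the issue disappears, but then you should say so.)

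For the prime-dimensional converse you are candid that the argument is incomplete, and indeed the sketch has real gaps. First, the one-way obstruction is the non-existence of a \emph{single} state $|\alpha\rangle$ with $\langle\alpha|U|\alpha\rangle=0$ for all $U\in\Delta\mathcal S$ (Lemma~\ref{lem4.1} here), not of an orthonormal basis; you still need to argue that once $\Delta\mathcal S$ meets all $d+1$ directions no such vector exists. Second, and more seriously, upgrading one-way indistinguishability to full (two-way, asymptotic) LOCC indistinguishability is not delivered by the references you list: \cite{gho2004pra,band2011njp,zhang2015pra} give one-way criteria, and \cite{nathan2013pra} in fact exhibits situations where two-way LOCC strictly beats one-way. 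The actual proof in \cite{hashi2021pra} uses a PPT/partial-transpose obstruction specific to prime $d$ to rule out even general LOCC; your proposal does not contain an analogue of that step, and that is the essential missing idea.
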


Under certain conditions, a $d$-GBS set is F-equivalent if and only if it is LOCC distinguishable. This is an unexpected result, as local distinguishability is much more complex than F-equivalence. An F-equivalence GBS set is one-way LOCC distinguishable, but the discriminant condition in Lemma \ref{th2.1} is difficult to verify.

Two unitaries $A$ and $B$ are called Weyl commutative if
$AB=zBA$, where $z$ is a complex number \cite{petz2008book}.

\begin{lemma}\label{lem2.1}\cite{yuan2020jpa,wang-yuan2021jmp}
For two unitary matrices $A$ and $B$, if they are not commutative and satisfy the Weyl commutation relation, then each eigenvector $|u\rangle$ of $A$ satisfies $\langle u|B|u\rangle=0$.
Especially, for an arbitrary GBS set $\mathcal{S}$,
if there is a GPM $T$ which is not commutative to every GPM $U$ in $\Delta \mathcal{S}$,
then each eigenvector $|v\rangle$ of $T$ satisfies $\langle v|U|v\rangle=0$
and the set $\mathcal{S}$  is one-way LOCC distinguishable.
\end{lemma}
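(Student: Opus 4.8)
The plan is to prove the two assertions in turn, the second being a consequence of the first together with a standard one-way LOCC protocol for maximally entangled states. For the first assertion, fix an eigenvector $|u\rangle$ of $A$, say $A|u\rangle = \lambda|u\rangle$; since $A$ is unitary, $|\lambda| = 1$ and $\langle u|A^{\dagger} = \bar{\lambda}\langle u|$. Left-multiplying the Weyl relation $AB = zBA$ by $A^{\dagger}$ gives $B = zA^{\dagger}BA$, and sandwiching this between $\langle u|$ and $|u\rangle$ yields
\[
\langle u|B|u\rangle = z\,\langle u|A^{\dagger}BA|u\rangle = z\,\langle u|B|u\rangle ,
\]
where the last equality uses $\langle u|A^{\dagger} = \bar{\lambda}\langle u|$, $A|u\rangle = \lambda|u\rangle$ and $|\lambda| = 1$. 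Hence $(1-z)\langle u|B|u\rangle = 0$, and since $A$ and $B$ do not commute we must have $z\neq 1$, so $\langle u|B|u\rangle = 0$. This step is short; the only points to watch are the use of unitarity of $A$ and the fact that non-commutativity forces $z\neq 1$.

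For the second assertion, I would first recall the Weyl relation for generalized Pauli matrices, $(X^{m}Z^{n})(X^{m'}Z^{n'}) = \omega^{\,nm'-n'm}(X^{m'}Z^{n'})(X^{m}Z^{n})$, so that any two GPMs are Weyl commutative and commute precisely when $nm'-n'm\equiv 0\pmod d$. Therefore, if a GPM $T$ is non-commutative with every element $U$ of $\Delta\mathcal{S}$, each pair $(T,U)$ meets the hypotheses of the first assertion, and we conclude $\langle v|U|v\rangle = 0$ for every eigenvector $|v\rangle$ of $T$ and every $U\in\Delta\mathcal{S}$. As noted in Section II, identifying $U_{j}U_{k}^{\dagger}$ with the GPM $(m_{j}-m_{k},\,n_{j}-n_{k})$ only multiplies it by a scalar phase, which changes neither the commutation relations nor the vanishing of $\langle v|U|v\rangle$.

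It remains to turn this into a one-way LOCC protocol for $\mathcal{S} = \{(I\otimes U_{i})|\Phi\rangle\}_{i=1}^{l}$. Since $T$ is unitary, hence normal, fix an orthonormal eigenbasis $\{|v_{k}\rangle\}$ of $T$ and let Bob measure his subsystem in it. A direct computation with $|\Phi\rangle = d^{-1/2}\sum_{j}|jj\rangle$ shows that on outcome $k$ the $i$-th state collapses, up to normalization, to $|\psi_{k}^{(i)}\rangle \propto \sum_{j}\langle v_{k}|U_{i}|j\rangle\,|j\rangle$ on Alice's side, with pairwise overlaps $\langle\psi_{k}^{(i)}|\psi_{k}^{(l)}\rangle = \langle v_{k}|U_{l}U_{i}^{\dagger}|v_{k}\rangle$. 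For $i\neq l$ the operator $U_{l}U_{i}^{\dagger}$ is a phase times the GPM $(m_{l}-m_{i},\,n_{l}-n_{i})\in\Delta\mathcal{S}$, so this overlap vanishes by the previous paragraph; hence for each of Bob's outcomes the post-measurement states on Alice's side are mutually orthogonal, and after Bob communicates $k$, Alice discriminates them with the corresponding projective measurement. I expect the main obstacle here to be bookkeeping rather than anything conceptual: tracking the conjugates and incidental phases in the collapsed states, keeping the direction of classical communication consistent, and confirming (as asserted in Section II) that replacing $U_{j}U_{k}^{\dagger}$ by its GPM label does not disturb the commutation structure used above.
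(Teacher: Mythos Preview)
Your argument is correct. For the first assertion the paper gives no proof (it merely cites \cite{yuan2020jpa,wang-yuan2021jmp}), and your eigenvalue sandwich is the standard one-line derivation. For the second assertion your reasoning is also sound; the overlap computation $\langle\psi_{k}^{(i)}|\psi_{k}^{(l)}\rangle=\langle v_{k}|U_{l}U_{i}^{\dagger}|v_{k}\rangle$ is right, and the phase ambiguity in identifying $U_{l}U_{i}^{\dagger}$ with a GPM is indeed irrelevant.

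Where you differ from the paper is in the LOCC protocol itself. The paper (in the paragraph immediately following the lemma) uses a \emph{teleportation-based} scheme: Alice prepares $|v\rangle$ on an ancilla, performs a generalized Bell measurement on the ancilla together with her half of the unknown GBS, and sends the outcome to Bob; after undoing the known Pauli correction Bob is left with $X^{m_{j}}Z^{n_{j}}|v\rangle$, and these $l$ candidates are mutually orthogonal by the first part of the lemma, so Bob finishes with a projective measurement. Your protocol is the \emph{direct-measurement} variant in the opposite direction: Bob measures his half in the eigenbasis $\{|v_{k}\rangle\}$ of $T$ and Alice distinguishes the resulting orthogonal residual states. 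Both routes rest on the same orthogonality $\langle v|U_{i}^{\dagger}U_{l}|v\rangle=0$ (equivalently, the criterion of Lemma~\ref{lem4.1}); the paper's version needs an ancilla but keeps the discrimination on Bob's side, while yours avoids the ancilla at the cost of reversing the communication direction. Either establishes one-way LOCC distinguishability.
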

Now let's briefly introduce the local discrimination protocol for a given GBS set $\mathcal{S}=\{|\Phi_{m_i,n_i}\rangle\}$ based on the eigenvector $|v\rangle$ of the GPM $T$ in Lemma 2 (see [8,27] for details). Since the quantum state $|v\rangle$ satisfies $\langle v|U_{i}U_{j}^{\dag}|v\rangle=0$ for  every $U_{i}U_{j}^{\dag}\in \Delta\mathcal{S}$, the states $X^{m_{i}}Z^{n_{i}}|v\rangle$ are mutually orthogonal. Alice needs to prepare the state $|v\rangle$ on an ancillary system. Then Alice and Bob use the teleportation scheme to teleport the state $|v\rangle$ via the unknown GBS $|\Phi_{m_j,n_j}\rangle$, which is given to be identified.
Upon completion of the teleportation, the state on Bob$^{,}$s side is actually $X^{m_{j}}Z^{n_{j}}|v\rangle$.
As these $l$ numbers of possible output states are orthogonal to each other, therefore, they can be distinguished by Bob. Hence the states $\{|\Phi_{m_i,n_i}\rangle\}_{i = 1}^{l}$ can be distinguished by using one-way LOCC only.
The key quantum state $|v\rangle$ is the eigenvector of the GPM $T$ in Lemma 2. For a detector (or an MCS) $\mathcal{C}$ of $\mathcal{S}$, the key vector is taken as the common eigenvector of the detector (see equation (2)  for the detection range of a detector $\mathcal{C}$).

In order to LOCC distinguish a GBS set, according to Lemma \ref{lem2.1},
we need to find a GPM $T$ that is not commutative to every GPM $U$ in $\Delta \mathcal{S}$.
For two GPMs $(m,n)$ and $(s,t)$, they are commutative if and only if $ns-mt=0 \mod d$.
Let $m, n\in \mathbb{Z}_{d}$, and $S(m, n)$ the set of elements in $\mathbb{Z}_{d}\times \mathbb{Z}_{d}$ that commute with $(m, n)$. In \cite{yuan2022quantum} the authors defined the so-called discriminant set,
\begin{eqnarray}
\mathcal{D}(\mathcal{S})\triangleq (\mathbb{Z}_{d}\times \mathbb{Z}_{d})\setminus \bigcup_{(m,n)\in\Delta\mathcal{S}} S(m,n),
\end{eqnarray}
which satisfies the condition in Lemma \ref{lem2.1} and can be used to  locally distinguish the set $\mathcal{S}$.
The following sufficient condition on local distinguishability has been derived, which is both necessary and sufficient when $d=4$,
\begin{lemma}\cite{yuan2022quantum}\label{th2.2}
Let $\mathcal{S}=\{(m_{i},n_{i})|1\le i\le l\}$ be an $l$-GBS set in  $\mathbb{C}^{d}\otimes\mathbb{C}^{d}$ with $4\le l\le d$. The set $\mathcal{S}$ is one-way LOCC distinguishable when any of the following conditions is true.
\begin{enumerate}
\item[{\rm(1)}] The discriminant set $\mathcal{D}(\mathcal{S})$ is not empty.
\item[{\rm(2)}] The set $\Delta\mathcal{S}$ is commutative.
\item[{\rm(3)}] The dimension $d$ is a composite number, and for each $(m,n)\in\Delta\mathcal{S}$, $m$ or $n$ is invertible in $\mathbb{Z}_d$.
\end{enumerate}
\end{lemma}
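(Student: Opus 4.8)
The plan is to handle the three conditions separately, in each case exhibiting an explicit unit vector $|v\rangle$ with $\langle v|X^{m}Z^{n}|v\rangle=0$ for every $(m,n)\in\Delta\mathcal{S}$. Since $\langle X^{m_i}Z^{n_i}v|X^{m_j}Z^{n_j}v\rangle$ equals, up to a phase, $\langle v|X^{m_j-m_i}Z^{n_j-n_i}|v\rangle$, this makes the states $X^{m_i}Z^{n_i}|v\rangle$ pairwise orthogonal, and the teleportation protocol recalled after Lemma \ref{lem2.1} then yields one-way LOCC distinguishability. Condition (1) is immediate: any two GPMs are Weyl commutative, so if $(s,t)\in\mathcal{D}(\mathcal{S})$ then $T=X^{s}Z^{t}$ is Weyl commutative with, but not commutative to, every element of $\Delta\mathcal{S}$, and Lemma \ref{lem2.1} applies verbatim (any eigenvector of $T$ serves as $|v\rangle$).

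For condition (2), observe that $\Delta\mathcal{S}$ commutative means the antisymmetric form $((m,n),(s,t))\mapsto ns-mt$ vanishes on a generating set of the subgroup $G\le\mathbb{Z}_d\times\mathbb{Z}_d$ generated by $\Delta\mathcal{S}$, hence on all of $G$; so $G$ is isotropic and $\{(m_i,n_i)\}\subseteq(m_1,n_1)+G$. Applying the local unitary $I\otimes(X^{m_1}Z^{n_1})^{\dagger}$ we may assume $(m_1,n_1)=(0,0)$, so $\mathcal{S}\subseteq G$ and the unitaries $\{X^{m}Z^{n}:(m,n)\in G\}$ pairwise commute. Let $\widetilde G$ be the finite abelian group of unitaries they generate and write $\mathbb{C}^{d}=\bigoplus_{\chi}V_{\chi}$ for its isotypic decomposition. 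Every nontrivial GPM is traceless, hence so is every non-scalar element of $\widetilde G$; expanding $\mathrm{tr}$ over the $V_{\chi}$ and applying Fourier inversion on $\widetilde G$ forces all occurring $V_{\chi}$ to have the common dimension $d/|G|\ge1$. Choosing $|v\rangle=\frac{1}{\sqrt{|G|}}\sum_{\chi}|e_{\chi}\rangle$ with $|e_{\chi}\rangle\in V_{\chi}$ a unit vector gives $\langle v|W|v\rangle=\frac{1}{|G|}\sum_{\chi}\chi(W)$, which vanishes for every non-scalar $W\in\widetilde G$ because the occurring characters form a coset of the annihilator $Z^{\perp}$ of the subgroup $Z$ of scalars in $\widetilde G$ and $\sum_{\chi\in Z^{\perp}}\chi(W)=|Z^{\perp}|\,[W\in Z]$; in particular it vanishes on the operators $X^{m}Z^{n}$, $(m,n)\in\Delta\mathcal{S}$.

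For condition (3), use the composite factorization $d=\prod_{p}p^{a_p}$. The hypothesis forces each $(m,n)\in\Delta\mathcal{S}$ to have $m$ or $n$ coprime to $d$, hence to every prime $p\mid d$. Now $X^{d/p}$ commutes with $Z^{d/p}$ iff $p^{2}\mid d$, while $X^{d/p}$ commutes with $X^{d/p'}$ and with $Z^{d/p'}$, and $Z^{d/p}$ with $Z^{d/p'}$, whenever $p\ne p'$ both divide $d$. Because $d$ is composite one may pick \emph{nonempty} prime sets $A,B$ with $A\cup B=\{p:p\mid d\}$ and $A\cap B\subseteq\{p:p^{2}\mid d\}$ (take $A=\{p_{1}\}$, $B=\{p_{2},\dots\}$ if $d$ has at least two prime factors, and $A=B=\{p\}$ if $d=p^{a}$ with $a\ge2$); then $\mathcal{F}=\{X^{d/p}:p\in A\}\cup\{Z^{d/p}:p\in B\}$ is a commuting family of unitaries and has a common eigenvector $|v\rangle$. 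From $Z^{d/p}X^{m}Z^{n}=\omega^{(d/p)m}X^{m}Z^{n}Z^{d/p}$ and $X^{d/p}X^{m}Z^{n}=\omega^{-(d/p)n}X^{m}Z^{n}X^{d/p}$, the vector $X^{m}Z^{n}|v\rangle$ is again a common eigenvector of $\mathcal{F}$, with the $Z^{d/p}$-eigenvalue rescaled by $\omega^{(d/p)m}$ and the $X^{d/p}$-eigenvalue by $\omega^{-(d/p)n}$. Given $i\ne j$, one of $m_{i}-m_{j}$, $n_{i}-n_{j}$ is coprime to $d$: in the first case any $q\in B$ makes $\omega^{(d/q)(m_i-m_j)}\ne1$, so $X^{m_i}Z^{n_i}|v\rangle$ and $X^{m_j}Z^{n_j}|v\rangle$ have different $Z^{d/q}$-eigenvalues; in the second case any $q\in A$ separates them via $X^{d/q}$. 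Hence the $l$ states are pairwise orthogonal.

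The routine but delicate steps are the equidimensionality of the occurring isotypic components in (2) — which is exactly what lets the equal-weight superposition kill all off-diagonal overlaps, and which needs the tracelessness of nontrivial GPMs together with a character-sum identity, with care required because $\widetilde G$ may contain scalars beyond $I$ — and, in (3), the verification that a commuting subfamily of $\{X^{d/p},Z^{d/p}\}_{p\mid d}$ rich enough to separate every pair exists: this is the single place where the hypothesis ``$d$ composite'' enters, and is therefore the natural obstacle in any attempted strengthening. It is also worth recording that, unlike (1), conditions (2) and (3) can hold for sets that are not F-equivalent (for instance $\{(0,0),(2,0),(0,2),(2,2)\}$ in $\mathbb{C}^{4}\otimes\mathbb{C}^{4}$), so they are genuinely broader sufficient conditions.
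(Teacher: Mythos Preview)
Your proof is correct. Part (1) matches the paper's reasoning exactly, via Lemma~\ref{lem2.1}. The paper does not reprove part (2) here---it is cited from \cite{yuan2022quantum}---so there is nothing to compare against; your isotypic-decomposition argument is sound, though it can be streamlined: once the commuting GPMs share a common orthonormal eigenbasis $\{|e_k\rangle\}_{k=1}^{d}$, the equal superposition $|v\rangle=d^{-1/2}\sum_k|e_k\rangle$ already gives $\langle v|U|v\rangle=d^{-1}\,\mathrm{tr}(U)=0$ for every traceless $U$ diagonal in that basis, so the equidimensionality of the $V_\chi$ and the coset-of-$Z^{\perp}$ character sum are not strictly needed.

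For part (3) your prime-indexed family $\mathcal{F}$ works, but the paper's route is considerably shorter: pick \emph{any} nontrivial factorisation $d=st$ and use only the commuting pair $\{X^{s},Z^{t}\}$ (they commute because $st=d$). If $m$ is a unit in $\mathbb{Z}_d$ then $mt\not\equiv0\bmod d$, so $Z^{t}$ fails to commute with $X^{m}Z^{n}$; if instead $n$ is a unit then $ns\not\equiv0$, so $X^{s}$ fails to commute with it. Either way some member of the pair is non-commuting but Weyl-commuting with $X^{m}Z^{n}$, and any common eigenvector of $X^{s},Z^{t}$ annihilates $\langle v|X^{m}Z^{n}|v\rangle$ by Lemma~\ref{lem2.1}. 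Your construction recovers this when $A=\{p\}$, $B=\{q\}$ for two distinct prime factors (or $A=B=\{p\}$ when $d=p^a$), but the extra generality buys nothing under the stated hypothesis, since invertibility in $\mathbb{Z}_d$ already means coprimality to \emph{every} prime factor; it would only become relevant under a weaker assumption requiring prime-by-prime separation.
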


According to Lemma \ref{lem2.1}, the key to one-way LOCC distinguishability is to find a quantum state $|v\rangle$ that satisfies $\langle v|U_{i}U_{j}^{\dag}|v\rangle=0$ for  every $U_{i}U_{j}^{\dag}\in \Delta\mathcal{S}$. Now we show that the condition (3) in Lemma \ref{th2.2} guarantees the existence of such quantum state:
Let $d$ be a composite number and $d=st$ be a decomposition. For $(m,n)\in\Delta\mathcal{S}$ and invertible $m$ in $\mathbb{Z}_d$, the GPM $Z^{t}$ is not commutative to $(m,n)$, then each eigenstate $|\alpha\rangle$ of $Z^{t}$ satisfies $\langle \alpha|X^{m}Z^{n}|\alpha\rangle=0$. Similarly, if $n$ is invertible in $\mathbb{Z}_d$,
then each eigenstate $|\beta\rangle$ of $X^{s}$ satisfies $\langle \beta|X^{m}Z^{n}|\beta\rangle=0$. Since $X^{s}$ and $Z^{t}$ are commutative, they have common eigenstates.
Obviously, each common eigenstate $|\gamma\rangle$ satisfies $\langle \gamma|X^{m}Z^{n}|\gamma\rangle=0$. Therefore the set $\mathcal{S}$ is one-way LOCC distinguishable.

From Lemma \ref{th2.1} and the definition of discriminant set, it can be seen that
{\it the F-equivalence of a set $\mathcal{S}$ is equivalent to that $\mathcal{D}(\mathcal{S})$ is nonempty.}
The discriminant set in Lemma 2 (1) are easy to calculate, so Lemma 2 can be used more easily than Lemma 1 to determine the local distinguishability of GBS sets.
It is known that there are two special 4-GBS sets in $\mathbb{C}^{4}\otimes\mathbb{C}^{4}$:
$\mathcal{L}_{1}=\{(0,0),(0,2),(2,0),(2,2)\}$ and $\mathcal{L}_{2}=\{(0,0),(0,1),(1,0),(3,3)\}$, which
have empty discriminant set (they are not F-equivalent and can not be locally distinguished by Lemma \ref{th2.1}) \cite[Examples 2-3]{yuan2022quantum},
and are both LOCC distinguishable by (2) and (3) in Lemma \ref{th2.2}, respectively.

\subsection{Detectors}
If there exists a GPM $(s,t)$ in $\mathbb{C}^{d}$ belonging to the discriminant set $\mathcal{D}(\mathcal{S})$, then $(s,t)$ is not commutative with every GPM in $\Delta\mathcal{S}$ and the set $\mathcal{S}$ is LOCC distinguishable. Accordingly,
$X^{s} Z^{t}$ is called a detector of the local distinguishability of GBSs \cite{li2022pra}. The ability of the detector $X^{s} Z^{t}$ is defined by the set
\begin{eqnarray*}
\mathcal{D}e(X^{s} Z^{t})\triangleq (\mathbb{Z}_{d}\times \mathbb{Z}_{d})\setminus  S(s,t).
\end{eqnarray*}
Naturally, we refer to the set $\mathcal{D}e(X^{s} Z^{t})$ as the detection range of $X^{s} Z^{t}$, that is, if $\Delta\mathcal{S}\subseteq \mathcal{D}e(X^{s} Z^{t})$, the set $\mathcal{S}$ is one-way LOCC distinguishable. Obviously, each GPM in the discriminant set $\mathcal{D}(\mathcal{S})$ is a detector of the one-way LOCC distinguishability of $\mathcal{S}$. Furthermore, as a maximally commutative set (MCS) $\mathcal{C}\triangleq\{X^{s_{i}}Z^{t_{i}}\}^{n}_{i=1}$ shares a common eigenbasis,
the set can be referred to as a stronger detector with detection range
\begin{eqnarray}
\mathcal{D}e(\mathcal{C})\triangleq \bigcup_{(s,t)\in\mathcal{C}} \mathcal{D}e(X^{s} Z^{t})
=(\mathbb{Z}_{d}\times \mathbb{Z}_{d})\setminus \mathcal{C}.
\end{eqnarray}

The definition of detector here is an improvement of that in \cite{li2022pra}.
Based on the definition of the detection range $\mathcal{D}e(\mathcal{C})$, we have
$\mathcal{D}e(\mathcal{C})=(\mathbb{Z}_{d}\times \mathbb{Z}_{d})\setminus \mathcal{C}$. Hence, the condition $\Delta\mathcal{S}\cap\mathcal{C}=\emptyset$ is equivalent to the condition $\Delta\mathcal{S}\subseteq \mathcal{D}e(\mathcal{C})$.
Generally, if the common eigenbasis of an MCS composed of unitary matrices can be used to locally distinguish a GBS set, the MCS is called a detector of the GBS set.
For convenience, one usually considers the MCS composed of GPMs. If $\Delta\mathcal{S}\subseteq \mathcal{D}e(\mathcal{C})$, the set $\mathcal{S}$ is one-way LOCC distinguishable. Therefore, we call the MCS $\mathcal{C}$ a detector of the LOCC distinguishability of the set $\mathcal{S}$, and denote $\mathcal{D}etector(\mathcal{S})$ all the detectors of  $\mathcal{S}$, that is,
\begin{eqnarray}\label{eq2.3}
\mathcal{D}etector(\mathcal{S})=\{\mathcal{C}\big|\Delta\mathcal{S}\subseteq \mathcal{D}e(\mathcal{C})\}=\{\mathcal{C}\big|\Delta\mathcal{S}\cap \mathcal{C}=\emptyset\}.
\end{eqnarray}
The Lemma \ref{th2.2} is extended to Lemma \ref{th2.3} as follows through MCSs (or detectors).

\begin{lemma}\cite{li2022pra}\label{th2.3}
Let $\mathcal{S}$ be a GBS set in $\mathbb{C}^{d}\otimes\mathbb{C}^{d}$ and $\mathcal{C}$ be an MCS of GPMs on $\mathbb{C}^{d}$.
If $\Delta\mathcal{S}\cap\mathcal{C}=\emptyset$ or $\Delta\mathcal{S}\subseteq\mathcal{C}$,
then the set $\mathcal{S}$ is one-way LOCC distinguishable.
\end{lemma}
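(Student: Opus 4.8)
My plan is to prove the two stated alternatives separately, each time reducing the problem to exhibiting one fixed unit vector $|v\rangle\in\mathbb{C}^{d}$ with $\langle v|U|v\rangle=0$ for every GPM $U\in\Delta\mathcal{S}$. Indeed, by the teleportation-based protocol recalled right after Lemma~\ref{lem2.1}, such a $|v\rangle$ makes the candidate output states $X^{m_{i}}Z^{n_{i}}|v\rangle$ pairwise orthogonal, so that Bob finishes with a single projective measurement and $\mathcal{S}$ is one-way LOCC distinguishable; note that this reduction only involves $\Delta\mathcal{S}$.

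For the alternative $\Delta\mathcal{S}\cap\mathcal{C}=\emptyset$ (equivalently $\Delta\mathcal{S}\subseteq\mathcal{D}e(\mathcal{C})$, the situation already anticipated in the text), I would take $|v\rangle$ to be \emph{any} common eigenvector of $\mathcal{C}$, which exists because the GPMs of an MCS pairwise commute and are therefore simultaneously diagonalizable. Fix $U\in\Delta\mathcal{S}$. Since $U\notin\mathcal{C}$ and $\mathcal{C}$ is a \emph{maximal} commuting family, $\mathcal{C}\cup\{U\}$ is not commuting, so $U$ fails to commute with some $T_{U}\in\mathcal{C}$; as any two GPMs satisfy the Weyl commutation relation, $T_{U}$ and $U$ meet the hypotheses of the first part of Lemma~\ref{lem2.1}, whence every eigenvector of $T_{U}$ kills $U$. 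In particular $|v\rangle$ does, being a common eigenvector of all of $\mathcal{C}$ and hence of $T_{U}$. Since $|v\rangle$ was fixed before $U$, this holds simultaneously for every $U\in\Delta\mathcal{S}$, even when no single $T\in\mathcal{C}$ works against all of $\Delta\mathcal{S}$. The only step with content here is that maximality of $\mathcal{C}$ upgrades ``$U\notin\mathcal{C}$'' to ``$U$ Weyl-anticommutes with a member of $\mathcal{C}$'', which is exactly the input Lemma~\ref{lem2.1} consumes.

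For the alternative $\Delta\mathcal{S}\subseteq\mathcal{C}$, the set $\Delta\mathcal{S}$ lies in a commuting family and is therefore commutative; when $|\mathcal{S}|\ge 4$ this is precisely condition~(2) of Lemma~\ref{th2.2}, which already yields one-way LOCC distinguishability, and for $|\mathcal{S}|\le 3$ the statement is well known. This alternative is genuinely not absorbed by the first: for the set with $\Delta\mathcal{S}=\mathcal{L}_{1}\setminus\{(0,0)\}$ ($\mathcal{L}_{1}$ as in the text, itself an MCS) one checks that every MCS meets $\Delta\mathcal{S}$, so only the ``$\subseteq$'' clause detects this set.

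I expect essentially no obstacle given the tools assumed: the lemma follows by combining the first part of Lemma~\ref{lem2.1} (for the first clause) with condition~(2) of Lemma~\ref{th2.2} (for the second clause), the only genuine argument being the short maximality step above. Real work reappears only if one insists on a self-contained proof of condition~(2) of Lemma~\ref{th2.2}, i.e.\ on constructing $|v\rangle$ directly from the isotropic subgroup $H=\langle\Delta\mathcal{S}\rangle\le\mathbb{Z}_{d}\times\mathbb{Z}_{d}$: after conjugating by a suitable Clifford $V$ (which carries $\Delta\mathcal{S}$ to $V\,\Delta\mathcal{S}\,V^{\dagger}$ and a working vector $|v\rangle$ to $V^{\dagger}|v\rangle$), a cyclic $H$ becomes ``$Z$-type'' and the uniform vector $\tfrac{1}{\sqrt d}\sum_{j}|j\rangle$ suffices, whereas a non-cyclic $H$ calls for a vector adapted to the decomposition of $H$ into cyclic factors; treating that case uniformly is where one would invoke the classification of isotropic subgroups of $\mathbb{Z}_{d}\times\mathbb{Z}_{d}$, and it is the only nontrivial ingredient.
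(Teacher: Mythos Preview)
Your proposal is correct and matches the paper's treatment. The paper cites this lemma from \cite{li2022pra} rather than proving it in full, but the surrounding text sketches exactly your argument for the clause $\Delta\mathcal{S}\cap\mathcal{C}=\emptyset$: take $|v\rangle$ to be a common eigenvector of the MCS $\mathcal{C}$ and invoke Lemma~\ref{lem2.1} for each $U\in\Delta\mathcal{S}$ against some $T_{U}\in\mathcal{C}$ not commuting with it (the maximality step you spell out is precisely the point the paper packages into the identity $\mathcal{D}e(\mathcal{C})=(\mathbb{Z}_{d}\times\mathbb{Z}_{d})\setminus\mathcal{C}$). For the clause $\Delta\mathcal{S}\subseteq\mathcal{C}$ the paper likewise identifies it with condition~(2) of Lemma~\ref{th2.2}, as you do.
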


The first and third cases of Lemma \ref{th2.2} are extended to the case $\Delta\mathcal{S}\cap\mathcal{C}=\emptyset$ of Lemma \ref{th2.3}.
The detectors in Lemma \ref{th2.3} is more powerful than the discriminant set: the nonemptyness of the discriminant set $\mathcal{D}(\mathcal{S})$ is equivalent to the existence of $X^{s}Z^{t}$ that do not commute with every element of $\Delta\mathcal{S}$, that is, the set $\Delta\mathcal{S}$ is included in the detection range $\mathcal{D}e(X^{s} Z^{t})$ of $X^{s} Z^{t}$ and $X^{s}Z^{t}$ is a detector of the one-way LOCC distinguishability of the set $\mathcal{S}$.
Meanwhile such an $X^{s}Z^{t}$ can be extended to be an MCS $\mathcal{C}$ of GPMs. As the elements in $\mathcal{C}$  all commute with $X^{s}Z^{t}$, $\Delta\mathcal{S}\cap \mathcal{C}=\emptyset$ and $\mathcal{C}$ is a stronger detector of $\mathcal{S}$.
Let us summarize the relationship among the F-equivalence, discriminant set $\mathcal{D}(\mathcal{S})$ and $\mathcal{D}etector(\mathcal{S})$:
{\it the F-equivalence of a set $\mathcal{S}$ is equivalent to the nonemptyness of $\mathcal{D}(\mathcal{S})$, each GPM in the discriminant set $\mathcal{D}(\mathcal{S})$ is a detector of the one-way LOCC distinguishability of $\mathcal{S}$, each detector $X^{s}Z^{t}$ in $\mathcal{D}(\mathcal{S})$ means that there is a stronger detector in $\mathcal{D}etector(\mathcal{S})$ that contains $X^{s}Z^{t}$.}
Figure 1 intuitively illustrates the relationship among the F-equivalence, discriminant set $\mathcal{D}(\mathcal{S})$ and $\mathcal{D}etector(\mathcal{S})$, as well as
the distinguishing ability between discriminant sets and detectors.
\begin{figure}[h]\label{fig1}
\centering
\begin{tikzpicture}

\fill[blue!10] (0,0) ellipse (4 and 2);
\draw[thick] (0,0) ellipse (4 and 2); 
\fill[blue!30] (0,0) ellipse (2.5 and 1); 
\node at (0,0.2) {\{$\mathcal{S}\big|\mathcal{D}(\mathcal{S})\neq\emptyset$\}};
\node at (0,-0.2) {(or $\mathcal{S}$ is F-equivalent)};
\draw[thick] (0,0) ellipse (2.5 and 1); 
\node at (0,-1.3) {\{$\mathcal{S}\big|\mathcal{D}etector(\mathcal{S})\neq\emptyset$\}};
\end{tikzpicture}
\caption{Schematic diagram of the relationship between F-equivalence, discriminant sets and detectors: The larger elliptical region represents the set of all GBS sets with detectors.}
\end{figure}
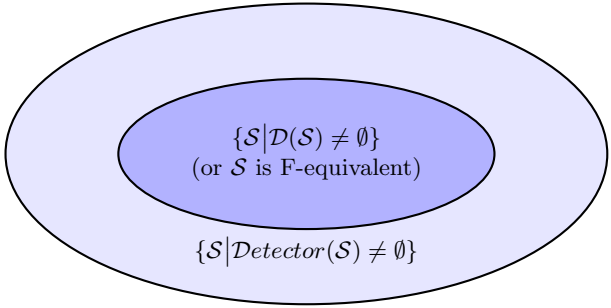

Given the importance of the detectors (or MCSs), Li et al. \cite{li2022pra} also provided a structural lemma for detectors.
\begin{lemma}[Structure characterization of MCSs]\cite{li2022pra}\label{th2.4}
Let $d\geq 3$ be an integer. For each pair $(i, j)$ in $\mathbb{Z}_{d}\times \mathbb{Z}_{d}$, where $i\neq 0$, define the following set:
\begin{equation*}    \label{eq:Cij}
\mathcal{C}_{i,j}\triangleq\{(x, y)\in \mathbb{Z}_{d}\times \mathbb{Z}_{d}\Big|
\left|\begin{array}{ccc}
i&j\\
x&y
\end{array}\right| \equiv 0\!\!\mod d, x\in i\mathbb{Z}_{d} \},
\end{equation*}
where $i\mathbb{Z}_{d}:=\{ij\in\mathbb{Z}_{d}\big|j\in\mathbb{Z}_{d}\}$.
Then $\mathcal{C}_{i,j}$ is an MCS of GBSs in $\mathbb{C}^{d}\otimes\mathbb{C}^{d}$ with exactly $d$ elements. Moreover, define $\mathcal{C}_{0,0}\triangleq \{(0, y)|y \in \mathbb{Z}_{d} \}$. Then every MCS of GBSs must be either one of $\mathcal{C}_{i,j}$ with $i \neq 0$ or $\mathcal{C}_{0,0}$. There are exactly $\sigma(d)$ (the sum of all divisors of the positive integer $d$) classes of MCSs $\mathcal{S}_{MC,d}$ which can be listed as follows:
\begin{eqnarray}
\mathcal{S}_{MC,d}\triangleq \{\mathcal{C}_{i,j}|d = ik, 0\leq j\leq k-1\}\cup\{\mathcal{C}_{0,0}\}.
\end{eqnarray}
\end{lemma}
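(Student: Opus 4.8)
The plan is to translate the whole problem into symplectic linear algebra over $\mathbb{Z}_d$, reducing the classification of MCSs to that of maximal isotropic subgroups of $\mathbb{Z}_d\times\mathbb{Z}_d$. Identifying $X^xZ^y$ with $(x,y)$, two GPMs $(x,y)$ and $(x',y')$ commute precisely when $B\big((x,y),(x',y')\big):=xy'-x'y\equiv 0\pmod d$, and $B$ is a bilinear alternating pairing on $\mathbb{Z}_d^2$. Bilinearity forces every commutative set $S$ to generate a commutative subgroup: for $u=\sum_\alpha c_\alpha s_\alpha$ and $v=\sum_\beta c'_\beta s'_\beta$ with $s_\alpha,s'_\beta\in S$ one has $B(u,v)=\sum_{\alpha,\beta}c_\alpha c'_\beta B(s_\alpha,s'_\beta)=0$. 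Consequently an MCS $H$ satisfies $H=\langle H\rangle$; that is, \emph{the MCSs are exactly the maximal isotropic subgroups of $(\mathbb{Z}_d^2,B)$}, and each of them contains $(0,0)$.

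First I would check that every $\mathcal{C}_{i,j}$ (and $\mathcal{C}_{0,0}$) is indeed an MCS. Closure under addition is immediate from bilinearity of the determinant together with $i\mathbb{Z}_d$ being a subgroup. For commutativity, given $(x,y),(x',y')\in\mathcal{C}_{i,j}$ write $x=ia$, $x'=ia'$; from $iy\equiv jx$ and $iy'\equiv jx'$ one deduces $i(ay'-a'y)\equiv 0$, and since $xy'-x'y=i(ay'-a'y)$ this vanishes mod $d$. The cardinality $|\mathcal{C}_{i,j}|=d$ comes from counting: $x$ runs over the $d/\gcd(i,d)$ elements of $i\mathbb{Z}_d$, and for each such $x$ the congruence $iy\equiv jx\pmod d$ is solvable (as $\gcd(i,d)\mid i\mid jx$) with exactly $\gcd(i,d)$ solutions. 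Maximality I would establish using two explicit members of $\mathcal{C}_{i,j}$, namely $(i,j)$ and $\big(0,d/\gcd(i,d)\big)$: any $(x_0,y_0)$ commuting with the latter must satisfy $x_0\cdot d/\gcd(i,d)\equiv 0\pmod d$, i.e. $x_0\in i\mathbb{Z}_d$, and commuting with the former gives $iy_0\equiv jx_0$, so $(x_0,y_0)\in\mathcal{C}_{i,j}$ already; the same trick with $(0,1)$ handles $\mathcal{C}_{0,0}$.

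For the converse, let $H$ be an MCS and let $i\mathbb{Z}_d=\pi_1(H)$ be the image of $H$ under projection to the first coordinate, so $i$ may be taken to be a divisor of $d$. If $i\equiv 0$ then $H\subseteq\{0\}\times\mathbb{Z}_d=\mathcal{C}_{0,0}$, and maximality gives $H=\mathcal{C}_{0,0}$. Otherwise choose $(i,y_0)\in H$, put $k=d/i$ and take $j\equiv y_0\pmod k$ with $0\le j\le k-1$. For any $(x,y)\in H$, writing $x=ia$, commutativity with $(i,y_0)$ gives $iy\equiv xy_0=iay_0$, while $i(y_0-j)\equiv 0\pmod d$ gives $iay_0\equiv iaj=jx$; hence $iy\equiv jx$ and $x\in i\mathbb{Z}_d$, so $(x,y)\in\mathcal{C}_{i,j}$. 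Thus $H\subseteq\mathcal{C}_{i,j}$, and $H=\mathcal{C}_{i,j}$ by maximality.

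To obtain the count $\sigma(d)$, I would verify that the listed sets are pairwise distinct: for distinct divisors $i$ of $d$ the projections $\pi_1(\mathcal{C}_{i,j})=i\mathbb{Z}_d$ already differ, and for a fixed $i$ the set $\{y:(i,y)\in\mathcal{C}_{i,j}\}=j+k\mathbb{Z}_d$ determines $j$ modulo $k=d/i$; since $\mathcal{C}_{d,0}=\mathcal{C}_{0,0}$, the total number is $\sum_{i\mid d}(d/i)=\sum_{k\mid d}k=\sigma(d)$. I expect the only friction to be the composite-$d$ bookkeeping — solvability and solution counts of $iy\equiv c\pmod d$, orders of the cyclic subgroups $i\mathbb{Z}_d$, and tracking which $\mathcal{C}_{i,j}$ coincide — rather than any conceptual obstacle: once the bilinearity remark reduces MCSs to isotropic subgroups, everything is elementary $\mathbb{Z}_d$-arithmetic, done carefully because $\mathbb{Z}_d$ is not a field.
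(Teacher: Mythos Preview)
The paper does not prove this lemma; it is quoted verbatim from \cite{li2022pra} as background (Lemma~\ref{th2.4}), so there is no in-paper proof to compare against. Your argument is correct and self-contained: the reduction of MCSs to maximal isotropic subgroups of $(\mathbb{Z}_d^2,B)$ via bilinearity is the right framework, the cardinality and maximality computations for $\mathcal{C}_{i,j}$ are sound (in particular, the two ``witness'' elements $(i,j)$ and $(0,d/\gcd(i,d))$ cleanly force any commuting element into $\mathcal{C}_{i,j}$), and the converse via $\pi_1(H)=i\mathbb{Z}_d$ together with the reduction $j\equiv y_0\pmod{d/i}$ is exactly what is needed. One cosmetic point on the count: since the lemma's definition of $\mathcal{C}_{i,j}$ requires $i\neq 0$ in $\mathbb{Z}_d$, the divisor $i=d$ does not appear in the list $\{\mathcal{C}_{i,j}:d=ik,\ 0\le j\le k-1\}$, so rather than invoking ``$\mathcal{C}_{d,0}=\mathcal{C}_{0,0}$'' you should simply sum $d/i$ over proper divisors $1\le i<d$ and add $1$ for $\mathcal{C}_{0,0}$, giving $(\sigma(d)-1)+1=\sigma(d)$.
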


Lemmas \ref{th2.3} and \ref{th2.4} show that the detectors are crucial for local distinguishability. We next study the constructions of the detectors for a GBS set and show how to determine all the detectors of a GBS set.

\section{Detectors of a GBS set in arbitrary dimensional systems}

Let $\mathcal{S}=\{(m_i, n_i) \}_{i=1}^{l}$ ($2\leq l\leq d$) be an $l$-GBS set in $\mathbb{C}^{d}\otimes\mathbb{C}^{d}$ ($d\geq 4$).
According to the definition of the set $\mathcal{D}etector(\mathcal{S})$ and equation (\ref{eq2.3}), an element of $\mathcal{D}etector(\mathcal{S})$ is an MCS that does not contain any element in the difference set $\Delta\mathcal{S}$. In other words, an MCS that contains any element in $\Delta\mathcal{S}$ is not an element of $\mathcal{D}etector(\mathcal{S})$. That is, $\mathcal{D}etector(\mathcal{S})$ has a useful expression based on MCSs as follows,
\begin{eqnarray}\label{eq3.1}
\mathcal{D}etector(\mathcal{S})=\mathcal{S}_{MC,d}\setminus\bigcup_{(m, n)\in\Delta\mathcal{S}}\{\mathcal{C}\big|(m, n)\in\mathcal{C}\}.
\end{eqnarray}
Hence, in order to determine $\mathcal{D}etector(\mathcal{S})$, we only need to determine the set $\{\mathcal{C}\big|(m, n)\in\mathcal{C}\}$ of each $(m, n)$ in $\Delta\mathcal{S}$.
For convenience, we denote $\{\mathcal{C}\big|(m, n)\in\mathcal{C}\}$ as $\mathcal{MCS}(m,n)$. The following will elucidate how to determine the set $\mathcal{MCS}(m,n)$ containing a GPM $(m, n)$.

According to Lemma 4, an MCS $\mathcal{C}_{i,j}$ in $\mathcal{S}_{MC,d}$ containing $(m, n)$ should satisfy the following conditions:
\begin{eqnarray*}\label{cond3.1}
i\mid d, i\mid m\ \hbox{and} \
\left|\begin{array}{ccc}
i&j\\
m&n
\end{array}\right| \equiv 0 \mod\ d.
\end{eqnarray*}
Let $d_{m}\triangleq \gcd(d,m)$ be the greatest common divisor of $d$ and $m$.
Then condition above can be rewritten as
\begin{eqnarray*}\label{}
i\mid d_{m},\ \hbox{and} \ in=jm \mod\ d.
\end{eqnarray*}
Now we need a useful result regarding homogeneous equations in one variable.
\begin{lemma}\label{lem3.1}\cite{wang2019pra,nath2000b}
Let $a,\ b,\ d$ be integers with $d \geq 1$ and $c =\gcd(a, d)$ being the greatest common divisor of $a$ and $d$. The congruence equation $ax\equiv b$ mod $d$ has a solution if and only if $c$ divides $b$. If $c$ does divide $b$ and $x_{0}$ is any solution, then the general solutions are given by $x = x_{0} + \frac{d}{c}t$, where $t \in \mathbb{Z}$. In particular,
the solution $x_{0}$ can be taken as $x_{0}=\frac{b}{c}(\frac{a}{c})^{-1}$ ($(\frac{a}{c})^{-1}$ stands for the inverse of $\frac{a}{c}$ in $\mathbb{Z}_{\frac{d}{c}}$, i.e., the integer $y$ such that $\frac{a}{c} y\equiv 1\mod  \frac{d}{c}$)
and the solutions form exactly $c$ congruence classes mod $d$.
\end{lemma}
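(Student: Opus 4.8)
The plan is to treat this as the classical theory of a linear congruence in one unknown, proceeding in four short steps: first characterize solvability via B\'ezout's identity; second exhibit the explicit particular solution; third describe the full solution set by reducing to the coprime case modulo $d/c$; and fourth count the resulting congruence classes modulo $d$. None of these steps requires anything beyond elementary divisibility arguments, so the statement follows by stitching them together.

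For solvability, I would observe that $ax\equiv b \pmod{d}$ has a solution in $x$ if and only if there exist integers $x,y$ with $ax+dy=b$, i.e.\ $b$ lies in the subgroup $\{ax+dy:x,y\in\mathbb{Z}\}$ of $\mathbb{Z}$. By B\'ezout's identity this subgroup equals $c\mathbb{Z}$ with $c=\gcd(a,d)$, hence a solution exists precisely when $c\mid b$. Assuming now $c\mid b$, write $a=ca'$, $b=cb'$, $d=cd'$, so $\gcd(a',d')=1$. Then $d\mid ax-b$ is equivalent to $cd'\mid c(a'x-b')$, i.e.\ to $a'x\equiv b' \pmod{d'}$; since $\gcd(a',d')=1$, the class of $a'$ is invertible in $\mathbb{Z}_{d'}$, and $x_{0}\equiv b'(a')^{-1}\pmod{d'}$, that is $x_{0}=\tfrac{b}{c}\bigl(\tfrac{a}{c}\bigr)^{-1}$ with the inverse taken in $\mathbb{Z}_{d/c}$, is a solution.

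For the general solution, if $x_{0},x_{1}$ are any two solutions then $a'(x_{1}-x_{0})\equiv 0\pmod{d'}$, and coprimality of $a'$ and $d'$ forces $d'\mid x_{1}-x_{0}$, hence $x_{1}=x_{0}+\tfrac{d}{c}t$ for some $t\in\mathbb{Z}$; conversely every such $x_{1}$ is again a solution. Finally, to count classes modulo $d$, note that $x_{0}+\tfrac{d}{c}t_{1}\equiv x_{0}+\tfrac{d}{c}t_{2}\pmod{d}$ iff $\tfrac{d}{c}(t_{1}-t_{2})\equiv 0\pmod{d}$ iff $t_{1}\equiv t_{2}\pmod{c}$, so the solution set consists of exactly $c$ distinct residue classes modulo $d$, represented by $x_{0},x_{0}+\tfrac{d}{c},\dots,x_{0}+(c-1)\tfrac{d}{c}$.

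There is no genuine obstacle here: the entire assertion is a standard number-theoretic fact (indeed it is quoted from \cite{wang2019pra,nath2000b}), and the only point that needs a moment's care is the reduction step, where one cancels the common factor $c$ from $ax\equiv b\pmod d$ \emph{only} after simultaneously dividing the modulus by $c$; cancelling in the congruence without adjusting the modulus would be invalid.
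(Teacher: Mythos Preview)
Your argument is correct and is the standard textbook proof of the linear congruence theorem. Note that the paper does not actually prove this lemma at all: it is simply quoted from the references \cite{wang2019pra,nath2000b} as a known number-theoretic fact, so there is no ``paper's own proof'' to compare against. Your write-up supplies exactly the elementary B\'ezout-and-reduction argument one finds in \cite{nath2000b}, including the care you flag about dividing the modulus by $c$ when cancelling the common factor.
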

Let $d_{m,n}\triangleq \gcd(d,m,n)$ be the greatest common divisor of $d$, $m$ and $n$.
If $m\neq 0$, due to Lemma \ref{lem3.1} and the condition $j\leq \frac{d}{i}-1$ in Lemma \ref{th2.4}, the condition above is equivalent to
\begin{eqnarray*}\label{}
i\mid d_{m},\ \big(d_{m}/i\big) \mid d_{m,n}, \\
j=\frac{in}{d_{m}}(\frac{m}{d_{m}})^{-1} (\hbox{mod}\ \frac{d}{d_{m}}) +t\frac{d}{d_{m}},t=0,\cdots,\frac{d_{m}}{i}-1.
\end{eqnarray*}
The case of $m=0$ can be considered similarly. Thus, we obtain the following theorem concerning the structure of the set $\mathcal{MCS}(m,n)$ containing $(m, n)$.

\begin{theorem}[Structure of $\mathcal{MCS}(m,n)$]\label{th3.1}
Let $(m,n)\neq(0,0)$ be a nontrivial GPM on $\mathbb{C}^{d}$ and  $\mathcal{C}_{i,j}$ be an element of $\mathcal{S}_{MC,d}$.
\begin{enumerate}
\item[{\rm(1)}] If $m\neq 0$,  $(m,n)\in \mathcal{C}_{i,j}$ if and only if $i$ and $j$ satisfy
\begin{eqnarray*}\label{}
i\mid d_{m},\ \big(d_{m}/i\big) \mid d_{m,n}, \\
j=\frac{in}{d_{m}}(\frac{m}{d_{m}})^{-1} (\hbox{mod}\ \frac{d}{d_{m}}) +t\frac{d}{d_{m}},t=0,\cdots,\frac{d_{m}}{i}-1.
\end{eqnarray*}
Specifically, when $d_{m,n}=1$, $(m,n)\in \mathcal{C}_{i,j}$ if and only if
$i=d_{m}$ and $j=n(\frac{m}{d_{m}})^{-1} (\hbox{mod}\ \frac{d}{d_{m}})$.
\item[{\rm(2)}] If $m=0$,  $(m,n)\in \mathcal{C}_{i,j}$ if and only if $i$ and $j$ satisfy $i=j=0$, or
$i\mid d,\ \big(d/i\big) \mid d_{n},\ j=0,\cdots,d/i-1.$
Specifically, when $d_{n}=1$, $(0,n)\in \mathcal{C}_{i,j}$ if and only if
$\mathcal{C}_{i,j}=\mathcal{C}_{0,0}$.
\end{enumerate}
\end{theorem}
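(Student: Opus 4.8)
The plan is to read the membership condition $(m,n)\in\mathcal C_{i,j}$ straight off the description of MCSs in Lemma \ref{th2.4}, convert the determinant relation into a single linear congruence in $j$ with the correct modulus, and then apply Lemma \ref{lem3.1}. First I would recall from Lemma \ref{th2.4} that for $i\neq 0$ one has $\mathcal C_{i,j}=\{(x,y):i\mid x,\ iy-jx\equiv 0\ (\mathrm{mod}\ d)\}$ while $\mathcal C_{0,0}=\{(0,y):y\in\mathbb Z_d\}$, that the members of $\mathcal S_{MC,d}$ are exactly the $\mathcal C_{i,j}$ with $i\mid d$ and $0\le j\le d/i-1$ together with $\mathcal C_{0,0}$, and that these are pairwise distinct. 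Moreover, writing $x=ix'$ turns $iy-jx\equiv 0\ (\mathrm{mod}\ d)$ into $y\equiv jx'\ (\mathrm{mod}\ d/i)$, so $\mathcal C_{i,j}$ depends on $j$ only through $j\ (\mathrm{mod}\ d/i)$; hence, for a fixed admissible $i$, counting the $\mathcal C_{i,j}$ containing $(m,n)$ reduces to counting the residues $j\ (\mathrm{mod}\ d/i)$ satisfying the defining relations.

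For part (1), I would take $m\neq 0$. Then $(m,n)\notin\mathcal C_{0,0}$, so only $i\neq 0$ can occur, and $(m,n)\in\mathcal C_{i,j}$ forces $i\mid d$ and $i\mid m$, i.e.\ $i\mid d_m$. Writing $m=im'$, the relation $iy-jx\equiv 0$ at $(x,y)=(m,n)$ becomes $i(n-jm')\equiv 0\ (\mathrm{mod}\ d)$, which (as $i\mid d$) is equivalent to $jm'\equiv n\ (\mathrm{mod}\ d/i)$ in the unknown $j$. Since $\gcd(m',d/i)=d_m/i$, Lemma \ref{lem3.1} gives solvability iff $(d_m/i)\mid n$; combining this with $(d_m/i)\mid d$ and $(d_m/i)\mid m$ (both following from $d_m\mid d$, $d_m\mid m$, $i\mid d_m$) shows solvability is equivalent to $(d_m/i)\mid d_{m,n}$. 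When it holds, Lemma \ref{lem3.1} gives the solution $j\equiv \frac{in}{d_m}\big(\frac{m}{d_m}\big)^{-1}\ (\mathrm{mod}\ \frac{d}{d_m})$ modulo shifts by $\frac{d/i}{d_m/i}=\frac{d}{d_m}$, so the representatives in $[0,d/i)$ are precisely $j=\frac{in}{d_m}\big(\frac{m}{d_m}\big)^{-1}+t\frac{d}{d_m}$ for $t=0,\dots,\frac{d_m}{i}-1$. The stated special case then follows since $d_{m,n}=1$ forces $d_m/i=1$, hence $i=d_m$ and a single $j$.

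For part (2), I would take $m=0$, so $n\neq 0$. Here $(0,n)\in\mathcal C_{0,0}$ always, giving $i=j=0$; for $i\neq 0$ the condition $i\mid x$ is vacuous at $x=0$, while $iy-jx\equiv 0$ at $(0,n)$ becomes $in\equiv 0\ (\mathrm{mod}\ d)$, i.e.\ $(d/i)\mid n$, with no restriction on $j$. Since $(d/i)\mid d$, this is the same as $(d/i)\mid d_n$, and then every $j\in\{0,\dots,d/i-1\}$ works. If in addition $d_n=1$, then $(d/i)\mid 1$ forces $i=d$, and $\mathcal C_{d,0}=\{(0,y):y\in\mathbb Z_d\}=\mathcal C_{0,0}$, so $\mathcal C_{0,0}$ is the only MCS containing $(0,n)$.

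The only delicate point is the bookkeeping in part (1): choosing to reduce the mod-$d$ determinant relation to a congruence modulo $d/i$ (rather than keeping modulus $d$ and afterwards intersecting with the range $[0,d/i)$), correctly identifying the relevant gcd as $d_m/i$, rewriting the solvability condition $(d_m/i)\mid n$ in the $\gcd(d,m,n)$-form demanded by the statement, and checking that the Lemma \ref{lem3.1} parametrization, with its representatives taken in $[0,d/i)$, reproduces exactly the $d_m/i$ listed values $j=\frac{in}{d_m}(\frac{m}{d_m})^{-1}+t\frac{d}{d_m}$. Everything else is a direct substitution into Lemmas \ref{th2.4} and \ref{lem3.1}.
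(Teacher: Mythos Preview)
Your proposal is correct and follows essentially the same route as the paper: translate membership in $\mathcal C_{i,j}$ via Lemma~\ref{th2.4} into the divisibility $i\mid d_m$ together with a linear congruence for $j$, then invoke Lemma~\ref{lem3.1} to obtain the solvability condition $(d_m/i)\mid d_{m,n}$ and the explicit list of $j$'s. The only cosmetic difference is that you divide through by $i$ first and work modulo $d/i$ (so that the admissible range $0\le j<d/i$ is one full period and the $d_m/i$ solutions fall out directly), whereas the paper keeps the congruence $jm\equiv in\pmod d$ and afterwards intersects the $d_m$ solutions with the interval $[0,d/i)$; both computations yield the same parametrization.
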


By using Theorem \ref{th3.1} and Equation (\ref{eq3.1}), all detectors $\mathcal{D}etector(\mathcal{S})$ of $\mathcal{S}$ can be obtained for each GBS set $\mathcal{S}$.

Now, we will elaborate on how to use Theorem \ref{th3.1} to determine the detectors for the following representative 4-GBS sets in $\mathbb{C}^{4}\otimes\mathbb{C}^{4}$, thereby assessing the one-way local distinguishability of these sets.
First, it is known that all 4-GBS sets in $\mathbb{C}^{4}\otimes\mathbb{C}^{4}$ can be classified into ten equivalent
classes \cite{tian2016pra} and the representative elements of these equivalence classes are
\begin{align*}
\mathcal{K}=\{I,X^{2},Z^{2},X^{2}Z^{2}\},&\ \mathcal{L}=\{I,X,X^{2},X^{3}\},\\
\Gamma^{1}_{20}=\{I,X,Z,X^{2}\},&\ \Gamma^{1}_{31}=\{I,X,Z,X^{3}Z\},\\
\Gamma^{1}_{33}=\{I,X,Z,X^{3}Z^{3}\},&\ \Gamma^{2}_{12}=\{I,X,Z^{2},XZ^{2}\},\\
\Gamma^{2}_{30}=\{I,X,Z^{2},X^{3}\},&\ \Gamma^{1}_{12}=\{I,X,Z,XZ^{2}\},\\
\Gamma^{2}_{20}=\{I,X,Z^{2},X^{2}\},&\ \Gamma^{2}_{32}=\{I,X,Z^{2},X^{3}Z^{2}\}.
\end{align*}

According to Lemma \ref{th2.4}, there are $\sigma(4)=7$ MCSs of GPMs on $\mathbb{C}^{4}$:
$\mathcal{C}_{0,0}$, $\mathcal{C}_{1,0}$, $\mathcal{C}_{1,1}$, $\mathcal{C}_{1,2}$, $\mathcal{C}_{1,3},\mathcal{C}_{2,0}$, $\mathcal{C}_{2,1}$.
By utilizing Theorem \ref{th3.1}, for each GPM $(m,n)$, we can calculate all MCSs that contain $(m,n)$.
For instance, if $(m,n)=(0,1)$, it is known from (2) of Theorem \ref{th3.1} that only $\mathcal{C}_{0,0}$ in $\mathcal{S}_{MC,4}$ contains $(0,1)$, which implies $\mathcal{MCS}(0,1)=\{\mathcal{C}_{0,0}\}$.
Similarly, if $(m,n)=(0,2)$, since the condition ``$i\mid d,\ \big(d/i\big) \mid d_{n},\ j=0,\cdots,d/i-1$" in Theorem \ref{th3.1} (2) leads to $(i,j)=(2,0) \ \hbox{or}\ (2,1)$, it is known from (2) of Theorem \ref{th3.1} that $\mathcal{MCS}(0,2)=\{\mathcal{C}_{0,0},\mathcal{C}_{2,0},\mathcal{C}_{2,1}\}$.
Proceeding in this manner, we can derive $\mathcal{MCS}(m,n)$ for every  GPM $(m,n)$ on $\mathbb{C}^{4}$, as detailed in Table \ref{tab3.1}.

\begin{table}[h]
\caption{\label{tab3.1} $\mathcal{MCS}(m,n)$ of each GPM $(m,n)$ on $\mathbb{C}^{4}$}
\begin{tabular}{|c|c|c|c|c|}
\hline
$m$$\diagdown$ $n$&0&1&2&3\\
\hline
0&$\mathcal{S}_{MC,4}$&$\mathcal{C}_{0,0}$&$\mathcal{C}_{0,0}$,$\mathcal{C}_{2,0}$,$\mathcal{C}_{2,1}$ &$\mathcal{C}_{0,0}$  \\
\hline
1&$\mathcal{C}_{1,0}$&$\mathcal{C}_{1,1}$&$\mathcal{C}_{1,2}$ &$\mathcal{C}_{1,3}$  \\
\hline
2&$\mathcal{C}_{1,0}$,$\mathcal{C}_{1,2}$,$\mathcal{C}_{2,0}$&$\mathcal{C}_{2,1}$&$\mathcal{C}_{1,1}$,$\mathcal{C}_{1,3}$,$\mathcal{C}_{2,0}$ &$\mathcal{C}_{2,1}$  \\
\hline
3&$\mathcal{C}_{1,0}$&$\mathcal{C}_{1,3}$&$\mathcal{C}_{1,2}$ &$\mathcal{C}_{1,1}$   \\
\hline
\end{tabular}
\end{table}

By using Table \ref{tab3.1} and Equation (\ref{eq3.1}), all detectors of 4-GBS sets in $\mathbb{C}^{4}\otimes\mathbb{C}^{4}$ can be easily identified,
thereby determining their local distinguishability, see Table \ref{tab3.2} for detectors of 10 representative 4-GBS sets.
\begin{table}[ht]
\caption{$\mathcal{D}etector(\mathcal{S})$ of ten 4-GBS sets  in $\mathbb{C}^{4}\otimes\mathbb{C}^{4}$}
\label{tab3.2}
\begin{tabular}{|c|c|c|c|c|}
\hline
$\mathcal{S}$&$\Delta\mathcal{S}$&$\mathcal{D}etector(\mathcal{S})$&Y$^{\dag}$/N\\
\hline
$\mathcal{K}$&$(0,2),(2,0),(2,2)$&$\emptyset$&Y\\
\hline
$\mathcal{L}$&$(1,0),(2,0),(3,0)$&$\mathcal{C}_{0,0}$,$\mathcal{C}_{1,1}$,$\mathcal{C}_{1,3}$,$\mathcal{C}_{2,1}$&Y\\
\hline
&$(0,1),(0,3),(1,0)$&$\mathcal{C}_{1,1}$&\\
$\Gamma^{1}_{20}$&$(3,0),(2,0),(2,1)$&&Y\\
&$(2,3),(1,3),(3,1)$&&\\
\hline
$\Gamma^{1}_{31}$& $(0,1),(0,3),(1,0),(3,0)$&$\mathcal{C}_{1,1}$,$\mathcal{C}_{1,2}$,$\mathcal{C}_{2,0}$&Y\\
& $(3,1),(1,3),(2,1),(2,3)$&&\\
\hline
& $(0,1),(0,3),(1,0),(3,0)$ &$\mathcal{C}_{2,0}$ &\\
$\Gamma^{1}_{33}$&$(3,1),(1,3),(3,3),(1,1)$ & &Y\\
&$(2,3),(2,1),(3,2),(1,2)$ & &\\
\hline
$\Gamma^{2}_{12}$&$(1,0),(3,0),(1,2)$&$\mathcal{C}_{1,1}$,$\mathcal{C}_{1,3}$&Y\\
&$(3,2),(0,2)$&&\\
\hline
$\Gamma^{2}_{30}$&$(1,0),(3,0),(0,2),(1,2)$&$\mathcal{C}_{1,1}$,$\mathcal{C}_{1,3}$&Y\\
&$(3,2),(2,0),(1,3),(3,1)$&&\\
\hline
&$(1,0),(3,0),(0,3)$&$\emptyset$&\\
$\Gamma^{1}_{12}$&$(0,1),(3,2),(1,2)$&&N(\cite{tian2016pra})\\
&$(1,3),(3,1),(0,2)$&&\\
\hline
$\Gamma^{2}_{20}$&$(1,0),(3,0),(0,2),(2,0)$&$\emptyset$&N(\cite{tian2016pra})\\
&$(1,2),(3,2),(2,2)$&&\\
\hline
$\Gamma^{2}_{32}$&$ (1,0),(3,0),(0,2)$&$\emptyset$&N(\cite{tian2016pra})\\
&$(2,2),(1,2),(3,2)$&&\\
\hline
\end{tabular}\\
$^{\dag}$: $Y$ stands for LOCC distinguishable.
\end{table}
Lemmas \ref{th2.2} can be used to locally distinguish the first 7 sets except for $\Gamma^{1}_{33}$, while the discriminant set $\mathcal{D}(\Gamma^{1}_{33})$ of  $\Gamma^{1}_{33}$ is empty and it can only be distinguished by Lemma \ref{th2.3}.
This demonstrates that the capability of the detectors is indeed stronger than that of the discriminant set. Table \ref{tab3.2} illustrates that in $\mathbb{C}^{4}\otimes\mathbb{C}^{4}$, a 4-GBS set $\mathcal{S}$ is LOCC distinguishable if and only if its $\mathcal{D}etector(\mathcal{S})$ is not empty or its difference set $\Delta\mathcal{S}$ is commutative.

For the convenience of future research, we provide the set $\mathcal{MCS}(m,n)$ of each GPM $(m,n)$ on $\mathbb{C}^{5}$, $\mathbb{C}^{6}$ and $\mathbb{C}^{8}$,
see tables \ref{tab3.3}, \ref{tab3.4} and \ref{tab3.5}, respectively.
\begin{table}[htbp]
\caption{\label{tab3.3} $\mathcal{MCS}(m,n)$ of each GPM $(m,n)$ on $\mathbb{C}^5$}
\begin{tabular}{|c|c|c|c|c|c|}
\hline
$m\diagdown n$&0&1&2&3&4\\
\hline
0&$\mathcal{S}_{MC,5}$&$\mathcal{C}_{0,0}$&$\mathcal{C}_{0,0}$&$\mathcal{C}_{0,0}$&$\mathcal{C}_{0,0}$  \\
\hline
1&$\mathcal{C}_{1,0}$&$\mathcal{C}_{1,1}$&$\mathcal{C}_{1,2}$ &$\mathcal{C}_{1,3}$ &$\mathcal{C}_{1,4}$ \\
\hline
2&$\mathcal{C}_{1,0}$&$\mathcal{C}_{1,3}$&$\mathcal{C}_{1,1}$ &$\mathcal{C}_{1,4}$&$\mathcal{C}_{1,2}$  \\
\hline
3&$\mathcal{C}_{1,0}$&$\mathcal{C}_{1,2}$&$\mathcal{C}_{1,4}$ &$\mathcal{C}_{1,1}$&$\mathcal{C}_{1,3}$   \\
\hline
4&$\mathcal{C}_{1,0}$&$\mathcal{C}_{1,4}$&$\mathcal{C}_{1,3}$ &$\mathcal{C}_{1,2}$&$\mathcal{C}_{1,1}$   \\
\hline
\end{tabular}
\end{table}

\begin{table}[hbtp]
\centering
\caption{\label{tab3.4} $\mathcal{MCS}(m,n)$ of each GPM $(m,n)$ on $\mathbb{C}^6$}
\begin{tabular}{|c|c|c|c|c|c|c|}
\hline
$m\diagdown n$&0&1&2&3&4&5\\
\hline
0&$\mathcal{S}_{MC,6}$&$\mathcal{C}_{0,0}$&$\mathcal{C}_{0,0}$,$\mathcal{C}_{3,0}$
&$\mathcal{C}_{0,0}$,$\mathcal{C}_{2,0}$&$\mathcal{C}_{0,0}$,$\mathcal{C}_{3,0}$&$\mathcal{C}_{0,0}$  \\
&&&$\mathcal{C}_{3,1}$&$\mathcal{C}_{2,1}$,$\mathcal{C}_{2,2}$&$\mathcal{C}_{3,1}$&\\
\hline
1&$\mathcal{C}_{1,0}$&$\mathcal{C}_{1,1}$&$\mathcal{C}_{1,2}$ &$\mathcal{C}_{1,3}$ &$\mathcal{C}_{1,4}$ &$\mathcal{C}_{1,5}$\\
\hline
2&$\mathcal{C}_{1,0}$,$\mathcal{C}_{1,3}$&$\mathcal{C}_{2,1}$&$\mathcal{C}_{1,1}$,$\mathcal{C}_{1,4}$ &$\mathcal{C}_{2,0}$&$\mathcal{C}_{1,2}$,$\mathcal{C}_{1,5}$ &$\mathcal{C}_{2,2}$ \\
&$\mathcal{C}_{2,0}$&&$\mathcal{C}_{2,2}$&&$\mathcal{C}_{2,1}$&\\
\hline
3&$\mathcal{C}_{1,0}$,$\mathcal{C}_{1,2}$&$\mathcal{C}_{3,1}$&$\mathcal{C}_{3,0}$ &
$\mathcal{C}_{1,1}$,$\mathcal{C}_{1,3}$&$\mathcal{C}_{3,0}$  &$\mathcal{C}_{3,1}$ \\
&$\mathcal{C}_{1,4}$,$\mathcal{C}_{3,0}$&&&$\mathcal{C}_{1,5}$,$\mathcal{C}_{3,1}$&&\\
\hline
4&$\mathcal{C}_{1,0}$,$\mathcal{C}_{1,3}$&$\mathcal{C}_{2,2}$&$\mathcal{C}_{1,2}$,$\mathcal{C}_{1,5}$ &$\mathcal{C}_{2,0}$&$\mathcal{C}_{1,1}$,$\mathcal{C}_{1,4}$ &$\mathcal{C}_{2,1}$ \\
&$\mathcal{C}_{2,0}$&&$\mathcal{C}_{2,1}$&&$\mathcal{C}_{2,2}$ &\\
\hline
5&$\mathcal{C}_{1,0}$&$\mathcal{C}_{1,5}$&$\mathcal{C}_{1,4}$&$\mathcal{C}_{1,3}$ &$\mathcal{C}_{1,2}$&$\mathcal{C}_{1,1}$  \\
\hline
\end{tabular}
\end{table}

\begin{table*}[hbtp]
\centering
\caption{\label{tab3.5} $\mathcal{MCS}(m,n)$ of each GPM $(m,n)$ on $\mathbb{C}^8$}
\begin{tabular}{|c|c|c|c|c|c|c|c|c|}
\hline
$m\diagdown n$&0&1&2&3&4&5&6&7\\
\hline
0&$\mathcal{S}_{MC,8}$&$\mathcal{C}_{0,0}$&$\mathcal{C}_{0,0}$,$\mathcal{C}_{4,0}$,$\mathcal{C}_{4,1}$
&$\mathcal{C}_{0,0}$&$\mathcal{C}_{0,0}$,$\mathcal{C}_{2,0}$,$\mathcal{C}_{2,1}$,$\mathcal{C}_{2,2}$,$\mathcal{C}_{2,3}$,$\mathcal{C}_{4,0}$,$\mathcal{C}_{4,1}$
&$\mathcal{C}_{0,0}$&$\mathcal{C}_{0,0}$,$\mathcal{C}_{4,0}$,$\mathcal{C}_{4,1}$&$\mathcal{C}_{0,0}$  \\
\hline
1&$\mathcal{C}_{1,0}$&$\mathcal{C}_{1,1}$&$\mathcal{C}_{1,2}$ &$\mathcal{C}_{1,3}$ &$\mathcal{C}_{1,4}$ &$\mathcal{C}_{1,5}$&$\mathcal{C}_{1,6}$&$\mathcal{C}_{1,7}$  \\
\hline
2&$\mathcal{C}_{1,0}$,$\mathcal{C}_{1,4}$,$\mathcal{C}_{2,0}$&$\mathcal{C}_{2,1}$&$\mathcal{C}_{1,1}$,$\mathcal{C}_{1,5}$,$\mathcal{C}_{2,2}$ &$\mathcal{C}_{2,3}$&$\mathcal{C}_{1,2}$,$\mathcal{C}_{1,6}$,$\mathcal{C}_{2,0}$ &$\mathcal{C}_{2,1}$
&$\mathcal{C}_{1,3}$,$\mathcal{C}_{1,7}$,$\mathcal{C}_{2,2}$&$\mathcal{C}_{2,3}$  \\
\hline
3&$\mathcal{C}_{1,0}$&$\mathcal{C}_{1,3}$&$\mathcal{C}_{1,6}$&$\mathcal{C}_{1,1}$&$\mathcal{C}_{1,4}$ &$\mathcal{C}_{1,7}$&$\mathcal{C}_{1,2}$&$\mathcal{C}_{1,5}$  \\
\hline
4&$\mathcal{C}_{1,0}$,$\mathcal{C}_{1,2}$,$\mathcal{C}_{1,4}$,$\mathcal{C}_{1,6}$,$\mathcal{C}_{2,0}$,$\mathcal{C}_{2,2}$,$\mathcal{C}_{4,0}$&$\mathcal{C}_{4,1}$ &$\mathcal{C}_{2,1}$,$\mathcal{C}_{2,3}$,$\mathcal{C}_{4,0}$&$\mathcal{C}_{4,1}$
&$\mathcal{C}_{1,1}$,$\mathcal{C}_{1,3}$,$\mathcal{C}_{1,5}$,$\mathcal{C}_{1,7}$,$\mathcal{C}_{2,0}$,$\mathcal{C}_{2,2}$,$\mathcal{C}_{4,0}$&$\mathcal{C}_{4,1}$
&$\mathcal{C}_{2,1}$,$\mathcal{C}_{2,3}$,$\mathcal{C}_{4,0}$&$\mathcal{C}_{4,1}$  \\
\hline
5&$\mathcal{C}_{1,0}$&$\mathcal{C}_{1,5}$&$\mathcal{C}_{1,2}$&$\mathcal{C}_{1,7}$ &$\mathcal{C}_{1,4}$&$\mathcal{C}_{1,1}$&$\mathcal{C}_{1,6}$&$\mathcal{C}_{1,3}$  \\
\hline
6&$\mathcal{C}_{1,0}$,$\mathcal{C}_{1,4}$,$\mathcal{C}_{2,0}$&$\mathcal{C}_{2,1}$&$\mathcal{C}_{1,3}$,$\mathcal{C}_{1,7}$,$\mathcal{C}_{2,2}$&$\mathcal{C}_{2,3}$ &$\mathcal{C}_{1,2}$,$\mathcal{C}_{1,6}$,$\mathcal{C}_{2,0}$&$\mathcal{C}_{2,1}$&$\mathcal{C}_{1,1}$,$\mathcal{C}_{1,5}$,$\mathcal{C}_{2,2}$&$\mathcal{C}_{2,3}$  \\
\hline
7&$\mathcal{C}_{1,0}$&$\mathcal{C}_{1,7}$&$\mathcal{C}_{1,6}$&$\mathcal{C}_{1,5}$&$\mathcal{C}_{1,4}$&$\mathcal{C}_{1,3}$ &$\mathcal{C}_{1,2}$&$\mathcal{C}_{1,1}$\\
\hline
\end{tabular}
\end{table*}

\section{Some 4-GBS sets without detectors}

Theorem \ref{th3.1} provides the set $\mathcal{D}etector(\mathcal{S})$ of a GBS set $\mathcal{S}$. Specifically, we use the set $\mathcal{D}etector(\mathcal{S})$ of 4-GBS sets in $\mathbb{C}^{4}\otimes\mathbb{C}^{4}$ to easily determine the local distinguishability.
Next, we study some 4-GBS sets without detectors. These GBS sets indicate that detectors are not necessary for local distinguishability.

\subsection{A 4-GBS set in $\mathbb{C}^{d}\otimes\mathbb{C}^{d}$ with even $d$}

Inspired by Wang et al. \cite{wang2017qip} and the definition of MCS $\mathcal{C}_{ij}$ in Lemma \ref{th2.4}, we can show the following Observation 1 (see Appendix A for the complete proof).
\begin{observation}\label{ex4.1}
The 4-GBS set $\mathcal{S}_{1}=\{(0,0),(0,\frac{d}{2}),(\frac{d}{2},0),(\frac{d}{2},\frac{d}{2})\}$ in $\mathbb{C}^{d}\otimes\mathbb{C}^{d}$ is local distinguishable. In particular, for all even $d\geq 4$, the set $\mathcal{D}etector(\mathcal{S}_{1})$ is empty.
If $d=2(2\ell+1)$ for some positive integer $\ell$, then $\Delta\mathcal{S}_{1}$ is not contained in any MCS.
\end{observation}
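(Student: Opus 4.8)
The plan is to prove, in turn, (a) that $\mathcal{S}_{1}$ is one-way LOCC distinguishable, (b) that $\mathcal{D}etector(\mathcal{S}_{1})=\emptyset$ for every even $d\ge 4$, and (c) that $\Delta\mathcal{S}_{1}$ lies in no MCS when $d=2(2\ell+1)$. Put $k:=d/2$; then $\Delta\mathcal{S}_{1}=\{(0,k),(k,0),(k,k)\}$, since the remaining three differences are $-(0,k),-(k,0),-(k,k)$, equal mod $d$ to the listed ones. Also $\omega^{k}=e^{i\pi}=-1$, so $Z^{k}|j\rangle=(-1)^{j}|j\rangle$, $X^{k}|j\rangle=|j+k\bmod d\rangle$ and $X^{k}Z^{k}|j\rangle=(-1)^{j}|j+k\bmod d\rangle$.

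For (a) I would exhibit the single key vector
\[
|v\rangle=\tfrac12\big(|0\rangle+|1\rangle+i\,|k\rangle+i\,|k+1\rangle\big),
\]
which is normalized and well defined because $d\ge 4$ makes $0,1,k,k+1$ four distinct residues, and then invoke the teleportation protocol recalled after Lemma~\ref{lem2.1}. A direct computation gives $\langle v|Z^{k}|v\rangle=\langle v|X^{k}|v\rangle=\langle v|X^{k}Z^{k}|v\rangle=0$: for $Z^{k}$ the four equal weights $\tfrac14$ carry the signs $+1,-1,(-1)^{k},-(-1)^{k}$ and cancel in two pairs; for $X^{k}$ and $X^{k}Z^{k}$ only the indices $0,1,k,k+1$ contribute, giving four terms $\pm i/4$ that again cancel in pairs, independently of the parity of $k$. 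Hence $\langle v|U_{i}U_{j}^{\dagger}|v\rangle=0$ for every $U_{i}U_{j}^{\dagger}\in\Delta\mathcal{S}_{1}$, so the states $X^{m_{i}}Z^{n_{i}}|v\rangle$ are mutually orthogonal and the protocol distinguishes $\mathcal{S}_{1}$ by one-way LOCC. (When $4\mid d$ there is a cheaper route: $k$ is even, so $k^{2}\equiv 0\bmod d$, $\Delta\mathcal{S}_{1}$ is commutative, and Lemma~\ref{th2.2}(2) applies. The substance of the observation is the case $d\equiv 2\bmod 4$, where --- as (b) and (c) show --- the detector/discriminant-set tools are powerless, so the explicit $|v\rangle$ is genuinely needed.)

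For (b), by (\ref{eq3.1}) it suffices to check that every MCS contains a member of $\Delta\mathcal{S}_{1}$. By Lemma~\ref{th2.4} an MCS is either $\mathcal{C}_{0,0}=\{(0,y)\}$, which contains $(0,k)$, or $\mathcal{C}_{i,j}=\{(x,y):iy\equiv jx\bmod d,\ x\in i\mathbb{Z}_{d}\}$ with $i\neq 0$. In the latter case $(0,k)\in\mathcal{C}_{i,j}\iff d\mid ik\iff 2\mid i$, which settles $i$ even. If $i$ is odd, then $g:=\gcd(i,d)$ is odd, and since $d$ is even $d/g$ must be even, so $g\mid d/2=k$; by Lemma~\ref{lem3.1} this says $k\in i\mathbb{Z}_{d}$. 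Now $(k,0)\in\mathcal{C}_{i,j}\iff d\mid jk\iff 2\mid j$, while $(k,k)\in\mathcal{C}_{i,j}\iff d\mid(i-j)k\iff i\equiv j\bmod 2$; as $i$ is odd the former holds when $j$ is even and the latter when $j$ is odd. Thus every MCS meets $\Delta\mathcal{S}_{1}$, so $\mathcal{D}etector(\mathcal{S}_{1})=\emptyset$ for all even $d\ge 4$.

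For (c), if $d=2(2\ell+1)$ then $k=2\ell+1$ is odd, and $(0,k)$ and $(k,0)$ commute iff $k\cdot k-0\cdot 0\equiv 0\bmod d$, i.e.\ $2k\mid k^{2}$, i.e.\ $k$ is even --- false here. Since any MCS consists of pairwise commuting GPMs, no MCS can contain both $(0,k)$ and $(k,0)$, so $\Delta\mathcal{S}_{1}$ is contained in no MCS. The only step that requires an idea is the distinguishability (a) for $d\equiv 2\bmod 4$: parts (b) and (c) show the standard detector/discriminant machinery is useless there, so the explicit construction of $|v\rangle$ cannot be avoided; everything else is bookkeeping with Lemma~\ref{th2.4} and the commutation criterion $ns-mt\equiv 0\bmod d$.
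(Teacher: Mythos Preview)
Your proof is correct. Parts (b) and (c) follow essentially the same parity case-split as the paper (you handle even $i$ more cleanly by observing $(0,k)\in\mathcal{C}_{i,j}$ for all even $i$ regardless of $j$; the paper picks the non-commuting pair $(k,0),(k,k)$ instead of your $(0,k),(k,0)$ in (c), but the idea is identical). The genuine difference is in part (a): the paper builds an explicit one-way LOCC protocol in which Alice performs the $d$-outcome projective measurement onto $\{|2r-2\rangle\pm|2r-1\rangle\}_{r=1}^{d/2}$ and verifies that Bob's four residual states are pairwise orthogonal for every outcome, whereas you produce a single key vector $|v\rangle$ and invoke the teleportation criterion (equivalently Lemma~\ref{lem4.1}). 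The two are closely related --- each of the paper's measurement vectors, e.g.\ $\tfrac{1}{\sqrt{2}}(|0\rangle+|1\rangle)$, is itself a valid key vector satisfying $\langle\alpha|U|\alpha\rangle=0$ for all $U\in\Delta\mathcal{S}_{1}$, so your $|v\rangle$ with its extra phase $i$ on the $|k\rangle,|k+1\rangle$ components is more elaborate than necessary. The paper's presentation makes the LOCC protocol fully explicit; your route is shorter and ties in more directly with the detector/key-vector philosophy of the rest of the paper.
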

This observation illustrates that the commutativity of the difference set is not a necessary condition for the local distinguishability of the 4-GBS set $\mathcal{S}_{1}$.
There should be conditions that are more general than the commutativity of the difference set. See Theorem \ref{th5.1} later on for details.

\subsection{Three 4-GBS sets in $\mathbb{C}^{6}\otimes\mathbb{C}^{6}$}

\begin{observation}\label{ex4.2}
The $\mathcal{D}etector(\mathcal{S}_{2})$ of the 4-GBS set $\mathcal{S}_{2}=\{(0,0),(0,1),(0,3),(3,0)\}$ in $\mathbb{C}^{6}\otimes\mathbb{C}^{6}$
is empty and $\mathcal{S}_{2}$ is one-way local indistinguishable.
\end{observation}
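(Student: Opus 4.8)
The statement has two parts: a finite verification, and a genuine impossibility argument.

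For the first part I would compute $\Delta\mathcal{S}_{2}=\{(0,1),(0,2),(0,3),(0,4),(0,5),(3,0),(3,1),(3,3),(3,5)\}$ directly from $\mathcal{S}_{2}$, then read off $\mathcal{MCS}(m,n)$ for each of these nine GPMs using Theorem~\ref{th3.1} (equivalently Table~\ref{tab3.4}) and take the union as in Equation~(\ref{eq3.1}). The entries $(0,b)$, $b\neq 0$, contribute $\mathcal{C}_{0,0}$ (and $\mathcal{C}_{3,0},\mathcal{C}_{3,1}$ from $b=2,4$, and $\mathcal{C}_{2,0},\mathcal{C}_{2,1},\mathcal{C}_{2,2}$ from $b=3$), $(3,0)$ contributes $\mathcal{C}_{1,0},\mathcal{C}_{1,2},\mathcal{C}_{1,4},\mathcal{C}_{3,0}$, and $(3,3)$ contributes $\mathcal{C}_{1,1},\mathcal{C}_{1,3},\mathcal{C}_{1,5},\mathcal{C}_{3,1}$. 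These exhaust all $\sigma(6)=12$ classes in $\mathcal{S}_{MC,6}$, so $\mathcal{D}etector(\mathcal{S}_{2})=\mathcal{S}_{MC,6}\setminus\bigcup_{(m,n)\in\Delta\mathcal{S}_{2}}\mathcal{MCS}(m,n)=\emptyset$.

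For the indistinguishability I would use the standard reduction for one-way discrimination of MESs. If Alice applies a POVM $\{M_{k}\}$ on $\mathbb{C}^{6}$ and announces $k$, then, using $(I\otimes U)|\Phi\rangle=(U^{T}\otimes I)|\Phi\rangle$, Bob's unnormalized conditional state for $|\Phi_{m_{i},n_{i}}\rangle$ is $\rho_{i}^{(k)}=\tfrac{1}{d}\,U_{i}M_{k}^{T}U_{i}^{\dagger}$ with $U_{i}=X^{m_{i}}Z^{n_{i}}$. Perfect one-way discrimination demands that for every $k$ the four states $\rho_{1}^{(k)},\dots,\rho_{4}^{(k)}$ have pairwise orthogonal supports; since each has rank $\mathrm{rank}\,M_{k}$ and $4\,\mathrm{rank}\,M_{k}\le 6$, every nonzero $M_{k}$ is rank one, say $M_{k}\propto|v_{k}\rangle\langle v_{k}|$, and orthogonality becomes $\langle v_{k}|U_{i}^{\dagger}U_{j}|v_{k}\rangle=0$ for $i\neq j$, i.e. $\langle v_{k}|X^{a}Z^{b}|v_{k}\rangle=0$ for every $(a,b)\in\Delta\mathcal{S}_{2}$. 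Since the transposed set $\{(-m_{i},n_{i})\}$ equals $\mathcal{S}_{2}$ modulo $6$, the Bob-to-Alice direction reduces to the same condition, so it suffices to show that no unit vector $|v\rangle\in\mathbb{C}^{6}$ annihilates all of $\Delta\mathcal{S}_{2}$.

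To do that, write $|v\rangle=\sum_{j}v_{j}|j\rangle$. The conditions $\langle v|Z^{b}|v\rangle=\sum_{j}|v_{j}|^{2}\omega^{bj}=0$ for $b=1,\dots,5$ say exactly that the discrete Fourier transform of $(|v_{j}|^{2})_{j}$ is concentrated at $0$, forcing $|v_{j}|^{2}=1/6$; put $v_{j}=e^{i\theta_{j}}/\sqrt{6}$ and $\phi_{j}:=\theta_{j}-\theta_{j+3}$, so $\phi_{j+3}=-\phi_{j}$ and only $\phi_{0},\phi_{1},\phi_{2}$ are free. Using $\omega^{3}=-1$, a short computation reduces $\langle v|X^{3}Z^{b}|v\rangle=0$ for $b\in\{0,1,3,5\}$ (where $b=5$ is equivalent to $b=1$ by conjugation) to
\begin{eqnarray*}
\cos\phi_{0}+\cos\phi_{1}+\cos\phi_{2}&=&0,\\
\sin\phi_{0}+\omega\sin\phi_{1}+\omega^{2}\sin\phi_{2}&=&0,\\
\sin\phi_{0}-\sin\phi_{1}+\sin\phi_{2}&=&0.
\end{eqnarray*}
The imaginary and real parts of the second line give $\sin\phi_{2}=-\sin\phi_{1}$ and then $\sin\phi_{0}=-\sin\phi_{1}$; substituting into the third line forces $3\sin\phi_{1}=0$, hence $\sin\phi_{0}=\sin\phi_{1}=\sin\phi_{2}=0$ and each $\cos\phi_{j}\in\{\pm 1\}$. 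Then $\cos\phi_{0}+\cos\phi_{1}+\cos\phi_{2}$ is an odd integer, contradicting the first line. So no such $|v\rangle$ exists, no one-way protocol can succeed, and $\mathcal{S}_{2}$ is one-way LOCC indistinguishable.

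I expect the only genuine work to be the "short computation" that turns the four quantities $\langle v|X^{3}Z^{b}|v\rangle$ into the three displayed real equations — mostly bookkeeping the relations $\phi_{j+3}=-\phi_{j}$ and $\omega^{3}=-1$ — together with the care needed in the rank-one reduction, which relies on necessity of orthogonal supports for perfect discrimination and on the inequality $4\cdot 2>6$ being special to $l=4$, $d=6$. Everything else is a mechanical check against Theorem~\ref{th3.1} and Table~\ref{tab3.4}.
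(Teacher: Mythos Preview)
Your argument is correct and follows essentially the same route as the paper's proof: both verify $\mathcal{D}etector(\mathcal{S}_{2})=\emptyset$ from Table~\ref{tab3.4}, both reduce one-way indistinguishability to the nonexistence of a unit vector $|v\rangle$ with $\langle v|U|v\rangle=0$ for all $U\in\Delta\mathcal{S}_{2}$ (you rederive the rank-one POVM reduction that the paper simply cites as Lemma~\ref{lem4.1}), both use the $Z^{b}$ conditions to force $|v_{j}|$ constant, and both finish with the $X^{3}Z^{b}$ conditions. The only cosmetic difference is the endgame: the paper expands $(\overline{\alpha}_{j}\alpha_{j+3})_{j}$ in the Fourier basis $\{|\omega_{i}\rangle\}$, concludes it lies in $\mathrm{span}\{|\omega_{2}\rangle,|\omega_{4}\rangle\}$, infers each entry is real of modulus $k^{2}$ hence $\pm k^{2}$, and then notes the resulting $3\times 2$ linear system in $a,b$ is inconsistent (augmented rank $3$); your trigonometric version with $\phi_{j}=\theta_{j}-\theta_{j+3}$ is the same computation in different clothing, with ``$\sin\phi_{j}=0$'' playing the role of ``$\overline{\alpha}_{j}\alpha_{j+3}$ is real'' and ``$\sum\cos\phi_{j}$ is an odd integer'' playing the role of the rank obstruction.
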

To prove the observation, we need the following lemma.

\begin{lemma}[\cite{gho2004pra,band2011njp,zhang2015pra}]\label{lem4.1}
An $l$-GBS set $\{|\Phi_{m_{j}n_{j}}\rangle\}_{j=1}^{l}$ in $\mathbb{C}^{d}\otimes\mathbb{C}^{d}$ is one-way LOCC distinguishable
if and only if there exists at least one state $|\alpha\rangle$ for which the set $\{U_{m_{j},n_{j}}|\alpha\rangle\}_{j=1}^{l}$ is pairwise orthogonal.
\end{lemma}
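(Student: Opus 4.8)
\emph{Plan.} The statement is a biconditional, and both sides can be reduced to a statement about a single qudit state, so I would first fix that translation and then argue the two implications separately. A one-way protocol consists of a POVM $\{E_a\}$ performed by Alice on system $A$, the classical outcome sent to Bob, and a final measurement by Bob on $B$. Using the paper's identity $(I\otimes U)|\Phi\rangle=(U^{T}\otimes I)|\Phi\rangle$ and the elementary fact $(\langle\phi|\otimes I)|\Phi\rangle=\tfrac{1}{\sqrt{d}}|\phi^{*}\rangle$ (entrywise complex conjugate in the computational basis), a \emph{rank-one} Kraus operator $M_a=|0\rangle\langle\phi_a|$ sends the $j$-th shared state to $|0\rangle_A\otimes\tfrac{1}{\sqrt{d}}\,U_{m_j,n_j}|\phi_a^{*}\rangle$. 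Hence Bob's conditional states are proportional to $U_{m_j,n_j}|\phi_a^{*}\rangle$, and since $U_{m_j,n_j}^{\dagger}U_{m_k,n_k}$ equals, up to a phase, the difference GPM in $\Delta\mathcal{S}$, the orthogonality of $\{U_{m_j,n_j}|\alpha\rangle\}_j$ is equivalent to $\langle\alpha|U|\alpha\rangle=0$ for every $U\in\Delta\mathcal{S}$. This single computation is the engine of both directions.

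\emph{Necessity.} Given any perfect one-way protocol, I would refine Alice's POVM into rank-one elements $|\phi_a\rangle\langle\phi_a|$; this only gives Bob more information (he may ignore the refinement label), so the protocol stays perfect. Because every GBS has maximally mixed marginal $I/d$ on $A$, the outcome probability $p(a|j)=\tfrac{1}{d}\langle\phi_a|\phi_a\rangle$ is independent of $j$, so every outcome with $E_a\neq 0$ occurs for every input and must therefore be resolved by Bob. Thus for each such $a$ the pure states $\{U_{m_j,n_j}|\phi_a^{*}\rangle\}_j$ must be perfectly discriminable by a single measurement, i.e.\ pairwise orthogonal, and $|\alpha\rangle=|\phi_a^{*}\rangle$ is the required state.

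\emph{Sufficiency.} Starting from a state $|\alpha\rangle$ with $\langle\alpha|U|\alpha\rangle=0$ for all $U\in\Delta\mathcal{S}$, I would first establish covariance: for any $(p,q)$ the state $U_{p,q}|\alpha\rangle$ is again good, since conjugating a difference GPM $U$ by $U_{p,q}$ returns $U$ up to a phase (the GPMs are Weyl commutative), whence $\langle\alpha|U_{p,q}^{\dagger}U\,U_{p,q}|\alpha\rangle$ is a phase times $\langle\alpha|U|\alpha\rangle=0$. Then I would build Alice's measurement as $E_{p,q}=\tfrac{1}{d}|\phi_{p,q}\rangle\langle\phi_{p,q}|$ with $|\phi_{p,q}\rangle=(U_{p,q}|\alpha\rangle)^{*}=\overline{U_{p,q}}\,|\alpha^{*}\rangle$. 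Since $\{\overline{U_{p,q}}\}=\{U_{p,q}\}$ as a set (because $\overline{X^{p}Z^{q}}=X^{p}Z^{-q}$) and the Weyl twirl satisfies $\tfrac{1}{d}\sum_{p,q}U_{p,q}\,\rho\,U_{p,q}^{\dagger}=I$ for every density operator $\rho$, the $E_{p,q}$ sum to $I$ and form a legitimate POVM. For outcome $(p,q)$ Bob's states are proportional to $U_{m_j,n_j}U_{p,q}|\alpha\rangle$, which are pairwise orthogonal by covariance, so Bob discriminates perfectly for \emph{every} outcome, yielding a perfect one-way protocol.

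\emph{Main obstacle.} The necessity direction is essentially forced once the post-measurement state is computed, so the real work is in sufficiency: a single good state $|\alpha\rangle$ is not enough, because Alice needs a complete measurement all of whose outcomes are simultaneously good. The key is precisely the pairing of covariance (good states are preserved under the Weyl group) with the twirl resolution of identity, which upgrades one good state into a whole POVM of good states. The only delicate checks are the twirl normalization and the fact that complex conjugation keeps one inside the GPM family; both are routine but indispensable to closing the argument.
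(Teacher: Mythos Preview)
Your proof is correct. The paper does not supply its own proof of this lemma---it is quoted as a known result from the cited references---so there is no in-paper argument to compare against line by line. That said, your sufficiency construction is precisely the POVM reformulation of the teleportation protocol the paper sketches informally after its Lemma~2: Alice preparing $|\alpha\rangle$ on an ancilla and performing a Bell measurement on the ancilla plus her share is exactly your rank-one POVM $\{E_{p,q}\}$, and the covariance step you isolate is what guarantees Bob succeeds for \emph{every} Bell outcome rather than only the trivial one. Your necessity argument (refine to rank one, use the maximally mixed $A$-marginal to force every nonzero outcome to occur for every input, then take $|\alpha\rangle=|\phi_a^{*}\rangle$) is the standard one from the cited literature and is sound; the twirl normalization and the closure of the GPM family under complex conjugation are both checked correctly.
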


The proof of Observation 2 is by contradiction. Assume that $\mathcal{S}_{2}$ is LOCC distinguishable. According to Lemma \ref{lem4.1}, there exists one state $|\alpha\rangle$ such that the set of post-measurement states $\{U_{m_{j},n_{j}}|\alpha\rangle\}_{j=1}^{l}$ is mutually orthogonal. Thus, we obtain a set of orthogonal relations $\langle\alpha|U_{m,n}|\alpha\rangle=0$ for $(m, n)\in\Delta \mathcal{S}_{2}$, which corresponds to a homogeneous system of equations with the unit vector $|\alpha\rangle$ as a solution. According to the properties of homogeneous systems of equations, if the number of effective equations is sufficiently large, the system will have a unique solution (the trivial solution). This contradicts the existence of a non-zero solution such as $|\alpha\rangle$, leading us to conclude that set $\mathcal{S}_{2}$ is LOCC indistinguishable. For the convenience, we have provided the complete proof in Appendix B.

Similar to the proof of Observation 2, we can also obtain Observation 3 (for the complete proof, see Appendix C).
\begin{observation}\label{ex4.3}
Let $\mathcal{S}_{3}=\{(0,0),(0,2),(0,4),(2,0)\}$ be a 4-GBS set in $\mathbb{C}^{6}\otimes\mathbb{C}^{6}$. Then $\mathcal{D}etector(\mathcal{S}_{3})$ is empty and $\mathcal{S}_{3}$ is one-way local indistinguishable.
\end{observation}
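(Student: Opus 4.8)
The plan is to treat the two assertions separately. For the claim $\mathcal{D}etector(\mathcal{S}_3)=\emptyset$ I would proceed by a direct table lookup. Up to an overall phase the difference set is
\[
\Delta\mathcal{S}_3=\{(0,2),(0,4),(2,0),(4,0),(2,2),(4,4),(2,4),(4,2)\},
\]
which is precisely $\bigl(\{0,2,4\}\times\{0,2,4\}\bigr)\setminus\{(0,0)\}$. Reading off $\mathcal{MCS}(m,n)$ for each of these eight GPMs from Table \ref{tab3.4} and forming the union, one sees that $\mathcal{C}_{0,0}$, all six $\mathcal{C}_{1,j}$, all three $\mathcal{C}_{2,j}$, and both $\mathcal{C}_{3,j}$ appear; since $\sigma(6)=12$, this exhausts $\mathcal{S}_{MC,6}$. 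Hence by Equation (\ref{eq3.1}), $\mathcal{D}etector(\mathcal{S}_3)=\mathcal{S}_{MC,6}\setminus\bigcup_{(m,n)\in\Delta\mathcal{S}_3}\mathcal{MCS}(m,n)=\emptyset$.

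For one-way LOCC indistinguishability I would argue by contradiction through Lemma \ref{lem4.1}. Suppose $\mathcal{S}_3$ were one-way LOCC distinguishable; then there is a unit vector $|\alpha\rangle=\sum_{j=0}^{5}\alpha_j|j\rangle$ with $\langle\alpha|X^mZ^n|\alpha\rangle=0$ for every $(m,n)\in\Delta\mathcal{S}_3$. Write $\omega=e^{2\pi i/6}$ and $\eta=\omega^2$, a primitive cube root of unity. The pair $(0,2)$ gives $\sum_j|\alpha_j|^2\eta^j=0$; grouping the indices mod $3$ and using $1+\eta+\eta^2=0$ together with $\||\alpha\rangle\|=1$ forces $|\alpha_0|^2+|\alpha_3|^2=|\alpha_1|^2+|\alpha_4|^2=|\alpha_2|^2+|\alpha_5|^2=\tfrac13$. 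Next, setting $\gamma_j=\overline{\alpha_j}\,\alpha_{j-2}$ (indices mod $6$), a short computation shows the three off-diagonal elements $(2,0),(2,2),(2,4)$ give respectively $\sum_j\gamma_j=0$, $\sum_j\eta^j\gamma_j=0$, $\sum_j\eta^{2j}\gamma_j=0$ (the remaining $(0,4),(4,0),(4,4),(4,2)$ only reproduce complex conjugates of these). Because $\eta^j$ and $\eta^{2j}$ are $3$-periodic in $j$, this is a nonsingular $3\times3$ Vandermonde system in $\gamma_0+\gamma_3$, $\gamma_1+\gamma_4$, $\gamma_2+\gamma_5$, so all three of these sums vanish.

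The remaining, decisive step analyses moduli and then phases. Taking absolute values in $\gamma_0+\gamma_3=0$, $\gamma_1+\gamma_4=0$, $\gamma_2+\gamma_5=0$ yields $|\alpha_0||\alpha_4|=|\alpha_1||\alpha_3|$, $|\alpha_1||\alpha_5|=|\alpha_2||\alpha_4|$, $|\alpha_0||\alpha_2|=|\alpha_3||\alpha_5|$; combining these with the three partial norms $=\tfrac13$ above forces $|\alpha_j|^2=\tfrac16$ for all $j$, and in particular each $\alpha_j\neq 0$. Writing $\alpha_j=\tfrac1{\sqrt6}e^{i\theta_j}$ and normalising $\theta_0=0$, the relations $\gamma_0+\gamma_3=\gamma_1+\gamma_4=\gamma_2+\gamma_5=0$ become
\[
\theta_3+\theta_4-\theta_1\equiv\pi,\qquad
\theta_4+\theta_5-\theta_1-\theta_2\equiv\pi,\qquad
\theta_5-\theta_2-\theta_3\equiv\pi\pmod{2\pi};
\]
substituting the first into the second gives $\theta_5\equiv\theta_2+\theta_3$, and then the third collapses to $0\equiv\pi\pmod{2\pi}$. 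This contradiction shows no orthogonalising $|\alpha\rangle$ exists, so $\mathcal{S}_3$ is one-way LOCC indistinguishable.

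The hard part is exactly this last paragraph: the modulus constraints alone are consistent and merely pin down $|\alpha_j|^2=\tfrac16$, so the obstruction only emerges at the level of the phases and one must carry the computation that far. The structural reason is that $\Delta\mathcal{S}_3$ consists of all nonzero ``even'' GPMs, which under the Chinese-remainder splitting $\mathbb{C}^6\cong\mathbb{C}^2\otimes\mathbb{C}^3$ form a complete operator basis on the $\mathbb{C}^3$ tensor factor (tensored with $I_2$); the orthogonality conditions then force the reduced state $\mathrm{Tr}_{\mathbb{C}^2}|\alpha\rangle\langle\alpha|$ to be maximally mixed on $\mathbb{C}^3$, impossible for a pure state of Schmidt rank at most $2$. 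Either route proves the observation, but the explicit computation above stays closest to the treatment of Observation \ref{ex4.2}.
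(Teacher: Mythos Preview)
Your argument is correct and begins exactly as the paper's Appendix~C: from $(0,2),(0,4)$ you get the three partial norms $|\alpha_0|^2+|\alpha_3|^2=|\alpha_1|^2+|\alpha_4|^2=|\alpha_2|^2+|\alpha_5|^2=\tfrac13$, and from $(2,0),(2,2),(2,4)$ the three relations $\gamma_0+\gamma_3=\gamma_1+\gamma_4=\gamma_2+\gamma_5=0$. These are precisely the paper's relations $\overline{\alpha_0}\alpha_2+\overline{\alpha_3}\alpha_5=0$, $\overline{\alpha_1}\alpha_3+\overline{\alpha_4}\alpha_0=0$, $\overline{\alpha_2}\alpha_4+\overline{\alpha_5}\alpha_1=0$ (up to conjugation). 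At this point the paper finishes in one line: these say the three \emph{nonzero} vectors $(\alpha_0,\alpha_3)$, $(\alpha_2,\alpha_5)$, $(\alpha_4,\alpha_1)\in\mathbb{C}^2$ are pairwise orthogonal, which is impossible in a two-dimensional space.

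Your finish instead splits into a modulus step and a phase step. Both are valid, but the modulus claim ``combining these forces $|\alpha_j|^2=\tfrac16$'' is not entirely immediate: one has to rule out vanishing components (each such case cascades to a contradiction with one of the product identities) and then do a short ratio argument, so a sentence or two of justification would be appropriate. Your Chinese-remainder remark at the end is in fact the paper's one-line conclusion in disguise: under $\mathbb{C}^6\cong\mathbb{C}^2\otimes\mathbb{C}^3$ the three vectors above are the columns of the $2\times3$ reshaping of $|\alpha\rangle$, and ``$\mathrm{Tr}_{\mathbb{C}^2}|\alpha\rangle\langle\alpha|$ maximally mixed but Schmidt rank $\le2$'' is exactly ``three orthogonal nonzero columns in $\mathbb{C}^2$''. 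You could therefore drop the modulus/phase paragraph entirely and invoke that observation directly, obtaining a proof identical in length and spirit to the paper's.
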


We need the following Lemma \ref{lem4.2}, see the proof in Appendix D, to illustrate the next observation.

\begin{lemma}\label{lem4.2}
Let $\mathcal{S}=\{(m_{j},n_{j})\}_{j=1}^{l}$ be a set of GBSs and $I_{\frac{d}{2}}\triangleq\{(\frac{d}{2},i)\}_{i=0}^{\frac{d}{2}}$.
If there exists $i_{0}$ such that $0< i_{0}< d-1$ and $\Delta S\supseteq I_{\frac{d}{2}}\cup I_{0}$, where $I_{0}=\{(0,i_{0}+2k)\}_{k=0}^{\frac{d}{2}-1}$,
then $\mathcal{S}$ is one-way local indistinguishable.
\end{lemma}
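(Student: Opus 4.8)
The goal is to show that if $\Delta\mathcal{S}$ contains both $I_{\frac d2}=\{(\frac d2,i)\}_{i=0}^{\frac d2}$ and $I_0=\{(0,i_0+2k)\}_{k=0}^{\frac d2-1}$ for some $0<i_0<d-1$, then no single vector $|\alpha\rangle$ can orthogonalize the post-measurement states, so by Lemma~\ref{lem4.1} the set $\mathcal{S}$ is one-way LOCC indistinguishable. The strategy is the contrapositive: assume $|\alpha\rangle=\sum_{j=0}^{d-1}a_j|j\rangle$ satisfies $\langle\alpha|U_{m,n}|\alpha\rangle=0$ for every $(m,n)\in\Delta\mathcal{S}$, and derive $|\alpha\rangle=0$. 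Recall $U_{m,n}=X^mZ^n$ acts by $X^mZ^n|j\rangle=\omega^{nj}|j+m\rangle$, so $\langle\alpha|X^mZ^n|\alpha\rangle=\sum_{j}\overline{a_{j+m}}\,a_j\,\omega^{nj}$.

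First I would exploit the $I_{\frac d2}$ constraints. The conditions $\langle\alpha|X^{\frac d2}Z^i|\alpha\rangle=0$ for $i=0,1,\dots,\frac d2$ give $\frac d2+1$ linear relations on the quantities $b_j:=\overline{a_{j+\frac d2}}\,a_j$. Writing each as $\sum_{j=0}^{d-1}b_j\omega^{ij}=0$ and noting $b_j$ has period $\frac d2$ in $j$ only up to the conjugation pattern, I would instead group the sum over the two residue classes mod $\frac d2$; because $\omega^{\frac d2}=-1$, the $\frac d2+1$ exponents $i=0,\dots,\frac d2$ acting on a vector supported on $\frac d2$ distinct frequencies (a Vandermonde-type system) force $b_j=0$ for all $j$, i.e. $\overline{a_{j+\frac d2}}\,a_j=0$ for every $j$. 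Hence for each $j$ at most one of $a_j,a_{j+\frac d2}$ is nonzero; so the support of $|\alpha\rangle$ meets each pair $\{j,j+\frac d2\}$ in at most one index, and $|\alpha\rangle$ has at most $\frac d2$ nonzero components.

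Next I would bring in the $I_0$ constraints $\langle\alpha|Z^{i_0+2k}|\alpha\rangle=0$ for $k=0,\dots,\frac d2-1$, i.e. $\sum_{j}|a_j|^2\omega^{(i_0+2k)j}=0$. Setting $c_j:=|a_j|^2$ and $\eta:=\omega^{i_0}$, this reads $\sum_j c_j\,\eta^{j}(\omega^{2})^{kj}=0$ for $k=0,\dots,\frac d2-1$. Since $\omega^2$ is a primitive $\frac d2$-th root of unity, $(\omega^2)^{kj}$ depends only on $j\bmod\frac d2$; collecting $c_j$ into the $\frac d2$ residue classes gives a Vandermonde system in the $\frac d2$ unknowns $\widehat c_r:=\sum_{j\equiv r(\frac d2)}c_j\eta^{j}$, with $\frac d2$ equations and the $k=0,\dots,\frac d2-1$ rows forming an invertible Vandermonde matrix; hence $\widehat c_r=0$ for each $r$. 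But from the previous paragraph each residue class $\{r,r+\frac d2\}$ contains at most one index with $c_j\neq 0$, so $\widehat c_r=c_{j(r)}\eta^{\,j(r)}$ is a single term, forcing $c_{j(r)}=0$. Therefore all $c_j=0$, so $|\alpha\rangle=0$, contradiction; thus $\mathcal{S}$ is one-way LOCC indistinguishable.

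The main obstacle is the bookkeeping in the two Vandermonde arguments: one must be careful that $\omega^{\frac d2}=-1$ (so the $I_{\frac d2}$ exponents $0,\dots,\frac d2$ act on exactly $\frac d2$ effective frequencies, not fewer) and that the shift by $\frac d2$ in the index genuinely decouples the pairs $\{j,j+\frac d2\}$; and for $I_0$ one must verify that the $\frac d2$ chosen exponents $i_0,i_0+2,\dots,i_0+d-2$ reduce, after pulling out $\eta^j$, to the full set of $\frac d2$-th powers so that the coefficient matrix is nonsingular. Once these two linear-algebra facts are in place the contradiction is immediate. I would relegate the explicit Vandermonde determinant computations to the appendix as indicated.
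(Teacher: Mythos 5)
Your proposal is correct and follows essentially the same route as the paper: the $I_{\frac{d}{2}}$ constraints force $\overline{\alpha}_{j}\alpha_{j+\frac{d}{2}}=0$ so that the support of $|\alpha\rangle$ meets each residue class mod $\frac{d}{2}$ at most once, and the $I_{0}$ constraints then give an invertible Vandermonde (order-$\frac{d}{2}$ DFT) system killing the surviving $|\alpha_j|^2$, contradicting normalization. The only spot worth tightening is your first step's ``Vandermonde-type'' claim: the cleanest justification, which is the one the paper uses, is that $\langle\alpha|X^{d/2}Z^{i}|\alpha\rangle=0$ is equivalent by conjugation to $\langle\alpha|X^{d/2}Z^{d-i}|\alpha\rangle=0$, so the $\frac{d}{2}+1$ exponents $i=0,\dots,\frac{d}{2}$ already yield all of $\mathbb{Z}_d$ and the full DFT matrix forces $(\overline{\alpha}_{j}\alpha_{j+\frac{d}{2}})_j$ to vanish.
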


\begin{observation}\label{ex4.4}
Let  $\mathcal{S}_{4}=\{(0,0),(0,1),(3,0),(3,3)\}$ be a 4-GBS set in  $\mathbb{C}^{6}\otimes\mathbb{C}^{6}$. Then $\mathcal{D}etector(\mathcal{S}_{4})$ is empty and $\mathcal{S}_{4}$ is one-way local indistinguishable.
\end{observation}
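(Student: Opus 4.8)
The plan is to treat the two assertions in turn, after one preliminary computation. \emph{Step 0: the difference set.} Listing the twelve ordered differences $(m_j,n_j)-(m_k,n_k)$ reduced modulo $6$, one finds
\[
\Delta\mathcal{S}_{4}=\{(0,1),(0,3),(0,5)\}\cup\{(3,n)\mid n\in\mathbb{Z}_{6}\},
\]
a set of nine GPMs: the pairs $(3,0)$ and $(3,3)$ already lie in $\Delta\mathcal{S}_{4}$, the crossed differences of $(0,1)$ with $(3,0)$ and with $(3,3)$ produce $(3,5),(3,1),(3,2),(3,4)$, so the whole ``$m=3$ row'' appears; the differences within $\{(0,0),(0,1),(3,0),(3,3)\}$ that keep $m=0$ give $(0,1),(0,5),(0,3)$.

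\emph{Step 1: no detector.} Here I would use (\ref{eq3.1}), i.e.\ $\mathcal{D}etector(\mathcal{S}_{4})=\mathcal{S}_{MC,6}\setminus\bigcup_{(m,n)\in\Delta\mathcal{S}_{4}}\mathcal{MCS}(m,n)$, and read the nine sets $\mathcal{MCS}(m,n)$ directly from Table~\ref{tab3.4} (equivalently from Theorem~\ref{th3.1}). One has $\mathcal{MCS}(0,1)=\mathcal{MCS}(0,5)=\{\mathcal{C}_{0,0}\}$ and $\mathcal{MCS}(0,3)=\{\mathcal{C}_{0,0},\mathcal{C}_{2,0},\mathcal{C}_{2,1},\mathcal{C}_{2,2}\}$, while the six pairs $(3,n)$ jointly contribute $\mathcal{C}_{1,0},\ldots,\mathcal{C}_{1,5}$ together with $\mathcal{C}_{3,0}$ and $\mathcal{C}_{3,1}$. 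Since $\sigma(6)=12$ and the twelve MCS classes of $\mathbb{C}^{6}$ are exactly $\mathcal{C}_{0,0}$, $\mathcal{C}_{1,j}\ (0\le j\le5)$, $\mathcal{C}_{2,j}\ (0\le j\le2)$, $\mathcal{C}_{3,j}\ (0\le j\le1)$, the union over $\Delta\mathcal{S}_{4}$ already exhausts $\mathcal{S}_{MC,6}$; hence $\mathcal{D}etector(\mathcal{S}_{4})=\emptyset$.

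\emph{Step 2: one-way local indistinguishability.} I would invoke Lemma~\ref{lem4.2} with $d=6$ and $i_{0}=1$, which is admissible since $0<1<d-1=5$. Then $I_{\frac{d}{2}}=\{(3,i)\}_{i=0}^{3}\subseteq\{(3,n)\mid n\in\mathbb{Z}_{6}\}\subseteq\Delta\mathcal{S}_{4}$ and $I_{0}=\{(0,1+2k)\}_{k=0}^{2}=\{(0,1),(0,3),(0,5)\}\subseteq\Delta\mathcal{S}_{4}$, so $\Delta\mathcal{S}_{4}\supseteq I_{\frac{d}{2}}\cup I_{0}$. Lemma~\ref{lem4.2} then yields that $\mathcal{S}_{4}$ is one-way LOCC indistinguishable.

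The proof is essentially a finite verification and there is no conceptual obstacle; the only places that require care are the modulo-$6$ reduction of $\Delta\mathcal{S}_{4}$ (one must check that the full ``$m=3$ row'' really is produced and that $\{(0,1),(0,3),(0,5)\}$ has the required arithmetic-progression form with a legal $i_{0}$) and the bookkeeping in Step~1, where all twelve MCS classes of $\mathbb{C}^{6}$ must be seen to be hit.
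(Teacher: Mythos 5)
Your proposal is correct and follows essentially the same route as the paper: compute $\Delta\mathcal{S}_{4}$, conclude emptiness of $\mathcal{D}etector(\mathcal{S}_{4})$ from Table \ref{tab3.4} via Eq.~(\ref{eq3.1}), and invoke Lemma \ref{lem4.2} with $d=6$ (your explicit choice $i_{0}=1$ and the verification that all twelve MCSs of $\mathbb{C}^{6}$ are hit simply spell out details the paper leaves implicit).
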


\begin{proof}
The difference set is given by $\Delta\mathcal{S}_{4}=\{(0,1),(0,3),(0,5),(3,0),(3,1),(3,2),(3,3),(3,4),(3,5)\}$.
From Table \ref{tab3.4}, it can be concluded that $\mathcal{D}etector(\mathcal{S}_{4})$ is empty. The one-way local indistinguishability of $\mathcal{S}_{4}$ is a special case $d=6$ of the Lemma \ref{lem4.2}.
\end{proof}

The three 4-GBS sets in this subsection illustrate that the GBS set without a detector is generally LOCC indistinguishable.

\section{Detector is almost necessary for local distinguishability of 4-GBS sets in $\mathbb{C}^{6}\otimes\mathbb{C}^{6}$}

Lemma \ref{th2.3} states that a GBS set with detectors is LOCC distinguishable,
indicating that detectors are sufficient for a GBS set to be LOCC distinguishable.
Meanwhile, Observation 1 illustrates that a GBS set without detectors may also be LOCC distinguishable, indicating that detectors are not necessary for a GBS set to be LOCC distinguishable. In this section, we demonstrate that the detectors are almost necessary and sufficient for LOCC distinguishable 4-GBS sets in $\mathbb{C}^{6}\otimes\mathbb{C}^{6}$.
\begin{theorem}\label{th5.1}
Let $\mathcal{S}$ be a  4-GBS set in $\mathbb{C}^{6}\otimes\mathbb{C}^{6}$.
Then the set $\mathcal{S}$ is one-way LOCC distinguishable if and only if
$\mathcal{D}etector(\mathcal{S})$ is not empty or the difference set $\Delta \mathcal{S}$ is $\{(0,3),(3,0),(3,3)\}$.
\end{theorem}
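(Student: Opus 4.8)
The plan is to treat the two implications separately and, for the harder ``only if'' direction, to reduce to a finite check over the local unitary (LU) equivalence classes of $4$-GBS sets in $\mathbb{C}^{6}\otimes\mathbb{C}^{6}$.

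\textbf{Sufficiency.} If $\mathcal{D}etector(\mathcal{S})\neq\emptyset$, pick any $\mathcal{C}\in\mathcal{D}etector(\mathcal{S})$; then $\Delta\mathcal{S}\cap\mathcal{C}=\emptyset$ by Equation~(\ref{eq2.3}), so Lemma~\ref{th2.3} shows that $\mathcal{S}$ is one-way LOCC distinguishable. If instead $\Delta\mathcal{S}=\{(0,3),(3,0),(3,3)\}$, observe that $H:=\{(0,0),(0,3),(3,0),(3,3)\}$ is the $2$-torsion subgroup of $\mathbb{Z}_{6}\times\mathbb{Z}_{6}$ and each difference is self-inverse, so the four labels of $\mathcal{S}$ form a coset $v_{0}+H$; left-multiplying every GBS by the GPM $X^{-a}Z^{-b}$ with $v_{0}=(a,b)$ is a local unitary that changes neither $\Delta\mathcal{S}$ nor one-way LOCC distinguishability and carries $\mathcal{S}$ to $\mathcal{S}_{1}=\{(0,0),(0,3),(3,0),(3,3)\}$, which is one-way LOCC distinguishable by Observation~\ref{ex4.1} in the case $d=6$.

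\textbf{Necessity.} I would prove the contrapositive: if $\mathcal{D}etector(\mathcal{S})=\emptyset$ and $\Delta\mathcal{S}\neq\{(0,3),(3,0),(3,3)\}$, then $\mathcal{S}$ is one-way LOCC indistinguishable. Two structural facts make this finite. First, by Lemma~\ref{lem4.1} the one-way LOCC distinguishability of $\mathcal{S}$ is equivalent to the existence of a unit vector $|\alpha\rangle$ with $\langle\alpha|X^{m}Z^{n}|\alpha\rangle=0$ for all $(m,n)\in\Delta\mathcal{S}$, and by Equation~(\ref{eq3.1}) the set $\mathcal{D}etector(\mathcal{S})$ is also determined by $\Delta\mathcal{S}$; hence everything depends on $\Delta\mathcal{S}$ alone. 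Second, the LU action on GBS sets acts on the labels through an affine map with linear part in $\mathrm{GL}(2,\mathbb{Z}_{6})$ (see \cite{tian2016pra,wang-yuan2021jmp}), and such a map, being a group automorphism of $\mathbb{Z}_{6}\times\mathbb{Z}_{6}$ up to translation, permutes the $\sigma(6)=12$ MCS classes of $\mathcal{S}_{MC,6}$ and fixes the set $\{(0,3),(3,0),(3,3)\}$ (it must preserve the characteristic subgroup $H$ and permute its three nonzero elements). Therefore it suffices to examine one representative from each LU class.

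For each representative I would read off $\Delta\mathcal{S}$ and compute $\mathcal{D}etector(\mathcal{S})=\mathcal{S}_{MC,6}\setminus\bigcup_{(m,n)\in\Delta\mathcal{S}}\mathcal{MCS}(m,n)$ from Equation~(\ref{eq3.1}) and Table~\ref{tab3.4}; thus $\mathcal{D}etector(\mathcal{S})=\emptyset$ exactly when $\Delta\mathcal{S}$ meets every one of the $12$ MCS classes $\mathcal{C}_{0,0}$, $\mathcal{C}_{1,0},\dots,\mathcal{C}_{1,5}$, $\mathcal{C}_{2,0},\mathcal{C}_{2,1},\mathcal{C}_{2,2}$, $\mathcal{C}_{3,0},\mathcal{C}_{3,1}$. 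Carrying out this covering analysis, the only LU classes with empty detector set turn out to be those of $\mathcal{S}_{1},\mathcal{S}_{2},\mathcal{S}_{3},\mathcal{S}_{4}$ from Section~IV; they are pairwise LU-inequivalent since $|\Delta\mathcal{S}_{1}|=3$, $|\Delta\mathcal{S}_{2}|=|\Delta\mathcal{S}_{3}|=8$, $|\Delta\mathcal{S}_{4}|=9$, while $\Delta\mathcal{S}_{3}$, unlike $\Delta\mathcal{S}_{2}$, lies in the characteristic subgroup $(2\mathbb{Z}_{6})\times(2\mathbb{Z}_{6})$. The class of $\mathcal{S}_{1}$ has $\Delta\mathcal{S}=\{(0,3),(3,0),(3,3)\}$ and is excluded by hypothesis; for the classes of $\mathcal{S}_{2},\mathcal{S}_{3},\mathcal{S}_{4}$ one-way LOCC indistinguishability is exactly Observations~\ref{ex4.2}, \ref{ex4.3}, \ref{ex4.4}, each proved by showing that the homogeneous system $\langle\alpha|X^{m}Z^{n}|\alpha\rangle=0$, $(m,n)\in\Delta\mathcal{S}$, forces $|\alpha\rangle=0$ (Lemma~\ref{lem4.1}, with Lemma~\ref{lem4.2} packaging the count for $\mathcal{S}_{4}$). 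This completes the contrapositive, and the theorem follows. The technical heart, and the step I expect to be the main obstacle, is the exhaustive part: invoking (or reconstructing) the list of LU classes of $4$-GBS sets in $\mathbb{C}^{6}\otimes\mathbb{C}^{6}$ and verifying through Table~\ref{tab3.4} that exactly four of them satisfy $\bigcup_{(m,n)\in\Delta\mathcal{S}}\mathcal{MCS}(m,n)=\mathcal{S}_{MC,6}$ --- equivalently, that the only difference sets of $4$-GBS sets hitting all six lines $\mathcal{C}_{1,0},\dots,\mathcal{C}_{1,5}$ together with $\mathcal{C}_{2,0},\mathcal{C}_{2,1},\mathcal{C}_{2,2},\mathcal{C}_{3,0},\mathcal{C}_{3,1}$ and $\mathcal{C}_{0,0}$ are, up to $\mathrm{GL}(2,\mathbb{Z}_{6})$, those of $\mathcal{S}_{1},\dots,\mathcal{S}_{4}$, so that nothing escapes Observations~\ref{ex4.2}--\ref{ex4.4}.
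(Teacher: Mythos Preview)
Your approach mirrors the paper's exactly: sufficiency via Lemma~\ref{th2.3} and Observation~\ref{ex4.1}, necessity by exhausting the LU classes of $4$-GBS sets in $\mathbb{C}^6\otimes\mathbb{C}^6$ (the paper cites the list of $31$ classes from \cite{wang2024arx}), computing their detector sets from Table~\ref{tab3.4}, and invoking Observations~\ref{ex4.2}--\ref{ex4.4} for the empty-detector cases.

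Two factual slips worth correcting. First, there are \emph{five} LU classes with $\mathcal{D}etector(\mathcal{S})=\emptyset$, not four: besides $\mathcal{S}_1,\dots,\mathcal{S}_4$ there is $\mathcal{C}^{30}=\{(0,0),(0,2),(2,0),(2,2)\}$. This does not break your argument, because $\Delta\mathcal{C}^{30}=\Delta\mathcal{S}_3$ and you already observed that one-way distinguishability depends only on $\Delta\mathcal{S}$; hence Observation~\ref{ex4.3} disposes of $\mathcal{C}^{30}$ as well. Second, $|\Delta\mathcal{S}_2|=9$, not $8$ (so the cardinality count does not separate $\mathcal{S}_2$ from $\mathcal{S}_4$; their LU-inequivalence is established in \cite{wang2024arx} by other means, and in any case is not needed for the theorem once you know each is indistinguishable).
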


Theorem \ref{th5.1} completely resolves the problem of one-way local discrimination of 4-GBS sets in $\mathbb{C}^{6}\otimes\mathbb{C}^{6}$.
Figure 2 illustrates the set $\Omega$ of all one-way LOCC distinguishable 4-GBS sets $\mathcal{S}$ in Theorem \ref{th5.1}.

If $\mathcal{S}$ is a standard 4-GBS set ($\mathcal{S}$ contains $(0,0)$), the condition $\Delta \mathcal{S}=\{(0,3),(3,0),(3,3)\}$ in Theorem \ref{th5.1} implies that $\mathcal{S}=\{(0,0),(0,3),(3,0),(3,3)\}$. From the Table \ref{tab3.2}, only the set $\mathcal{K}$ has no detector and its difference set is commutative. Hence, Theorem \ref{th5.1} is also true for the 4-GBS sets in $\mathbb{C}^{4}\otimes\mathbb{C}^{4}$.
\begin{figure}[htb]\label{fig2}
\centering
\begin{tikzpicture}
\fill[blue!10](-4,-2) rectangle (4,2);\draw[thick] (-4,-2) rectangle (4,2); %
\fill[blue!30] (0,0.2) ellipse (3.5 and 1.5); 
\draw[thick] (0,0.2) ellipse (3.5 and 1.5); 
\node at (0,0.2) {$A$=\{$\mathcal{S}\big|\mathcal{D}etector(\mathcal{S})\neq\emptyset$\}}; %
\node at (0,-1.7) {$\Omega-A=\{\mathcal{S}\big|\Delta \mathcal{S}=\{(0,3),(3,0),(3,3)\}\}$};
\end{tikzpicture}
\caption{Schematic diagram of the set $\Omega$ of all one-way LOCC distinguishable 4-GBS sets $\mathcal{S}$ in Theorem \ref{th5.1}.}
\end{figure}
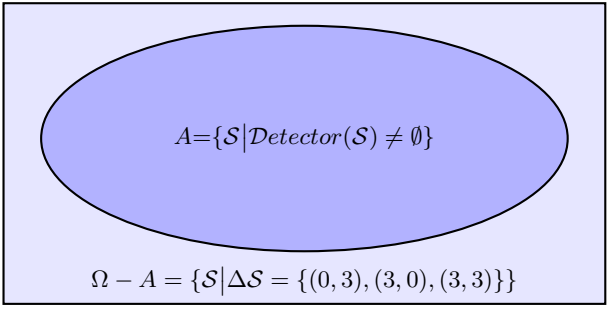

\begin{proof} If the $\mathcal{D}etector(\mathcal{S})$ is not empty,
then there is an MCS $\mathcal{C}$ in $\mathcal{D}etector(\mathcal{S})$ satisfying  $\Delta\mathcal{S}\cap\mathcal{C}=\emptyset$, and $\mathcal{S}$
is one-way LOCC distinguishable by Lemma \ref{th2.3}.
If the  difference set $\Delta \mathcal{S}$ is $\{(0,3),(3,0),(3,3)\}$,
then $\mathcal{S}$ is LU-equivalent to the set $\{(0,0),(0,3),(3,0),(3,3)\}$ and is one-way LOCC distinguishable too.

Now consider the necessity. Let $\mathcal{S}=\{(m_{i},n_{i})|1\le i\le 4\}$ be one-way LOCC distinguishable. We need to show that the $\mathcal{D}etector(\mathcal{S})$ is not empty or the  difference set $\Delta \mathcal{S}$ is $\{(0,3),(3,0),(3,3)\}$.
It is known that \cite{wang2024arx} there are 31 LU-equivalent  classes of 4-GBS sets in $\mathbb{C}^{6}\otimes \mathbb{C}^{6}$ and the representative elements of these equivalence classes are as follows (sorted in lexicographic order, see Table \ref{tab5.1}),
\begin{table}[htbp]
\centering
\caption{Representative elements of 31 LU-equivalent  classes of 4-GBS sets  in $\mathbb{C}^{6}\otimes\mathbb{C}^{6}$}
\label{tab5.1}
\begin{tabular}{|c|c|c|c|c|c|}
\hline
$\mathcal{S}$&Elements of $\mathcal{S}$&$\mathcal{S}$&Elements of $\mathcal{S}$&$\mathcal{S}$&Elements of $\mathcal{S}$\\
\hline
$\mathcal{C}^{1}$&$I,Z,Z^{2},Z^{3}$&$\mathcal{C}^{2}$&$I,Z,Z^{2},Z^{4}$&$\mathcal{C}^{3}$&$I,Z,Z^{2},X$\\
\hline
$\mathcal{C}^{4}$&$I,Z,Z^{2},X^{2}$&$\mathcal{C}^{5}$&$I,Z,Z^{2},X^{2}Z$&$\mathcal{C}^{6}$&$I,Z,Z^{2},X^{3}$\\
\hline
$\mathcal{C}^{7}$&$I,Z,Z^{2},X^{3}Z$&$\mathcal{C}^{8}$&$I,Z,Z^{3},Z^{4}$&$\mathcal{C}^{9}$&$I,Z,Z^{3},X$\\
\hline
$\mathcal{C}^{10}$&$I,Z,Z^{3},X^{2}$&$\mathcal{C}^{11}$&$I,Z,Z^{3},X^{2}Z$&$\mathcal{C}^{12}$&$I,Z,Z^{3},X^{3}$\\
\hline
$\mathcal{C}^{13}$&$I,Z,Z^{3},X^{3}Z$&$\mathcal{C}^{14}$&$I,Z,Z^{3},X^{5}$&$\mathcal{C}^{15}$&$I,Z,X,XZ$\\
\hline
$\mathcal{C}^{16}$&$I,Z,X,XZ^{2}$&$\mathcal{C}^{17}$&$I,Z,X,XZ^{3}$&$\mathcal{C}^{18}$&$I,Z,X,X^{2}Z^{2}$\\
\hline
$\mathcal{C}^{19}$&$I,Z,X,X^{3}Z^{5}$&$\mathcal{C}^{20}$&$I,Z,X,X^{4}Z^{4}$&$\mathcal{C}^{21}$&$I,Z,X^{2},X^{2}Z$\\
\hline
$\mathcal{C}^{22}$&$I,Z,X^{2},X^{2}Z^{2}$&$\mathcal{C}^{23}$&$I,Z,X^{2},X^{2}Z^{5}$&$\mathcal{C}^{24}$&$I,Z,X^{2},X^{3}Z$\\
\hline
$\mathcal{C}^{25}$&$I,Z,X^{2},X^{4}$&$\mathcal{C}^{26}$&$I,Z,X^{3},X^{3}Z$&$\mathcal{C}^{27}$&$I,Z,X^{3},X^{3}Z^{3}$\\
\hline
$\mathcal{C}^{28}$&$I,Z,X^{3},X^{3}Z^{5}$&$\mathcal{C}^{29}$&$I,Z^{2},Z^{4},X^{2}$&$\mathcal{C}^{30}$&$I,Z^{2},X^{2},X^{2}Z^{2}$\\
\hline
$\mathcal{C}^{31}$&$I,Z^{3},X^{3},X^{3}Z^{3}$&$$&$$&$$&$$\\
\hline
\end{tabular}
\end{table}
By using Table \ref{tab3.4}, all detectors of 31 sets can be identified (see Table \ref{tab5.2} for details). And then Lemma \ref{th2.3} can be used to locally distinguish the 31 sets except for the 5 sets: $\mathcal{C}^{12}$, $\mathcal{C}^{27}$ and $\mathcal{C}^{29}$-$\mathcal{C}^{31}$. From observations 1-4, we have that $\mathcal{C}^{31}$ is one-way LOCC distinguishable,
and $\mathcal{C}^{12}$, $\mathcal{C}^{27}$, $\mathcal{C}^{29}$, $\mathcal{C}^{30}$ are one-way LOCC indistinguishable. Therefore, there are 4 sets ($\mathcal{C}^{12}$, $\mathcal{C}^{27}$, $\mathcal{C}^{29}$, $\mathcal{C}^{30}$) that are LOCC indistinguishable,
while the remaining 27 sets are LOCC distinguishable.
\begin{table}[htbp]
\centering
\caption{$\mathcal{D}etector(\mathcal{S})$ of 31 GBS sets  in $\mathbb{C}^{6}\otimes\mathbb{C}^{6}$}
\label{tab5.2}
\begin{tabular}{|c|c|c|c|}
\hline
$\mathcal{S}$&$\mathcal{D}etector(\mathcal{S})$&$\mathcal{S}$&$\mathcal{D}etector(\mathcal{S})$\\
\hline
$\mathcal{C}^{1}$&$\mathcal{C}_{1,0}$,$\mathcal{C}_{1,1}$,$\mathcal{C}_{1,2}$,$\mathcal{C}_{1,3}$,$\mathcal{C}_{1,4}$,$\mathcal{C}_{1,5}$
&$\mathcal{C}^{2}$&$\mathcal{C}_{1,0}$,$\mathcal{C}_{1,1}$,$\mathcal{C}_{1,2}$,$\mathcal{C}_{1,3}$,$\mathcal{C}_{1,4}$,$\mathcal{C}_{1,5}$\\
\hline
$\mathcal{C}^{3}$&$\mathcal{C}_{1,1}$,$\mathcal{C}_{1,2}$,$\mathcal{C}_{1,4}$,$\mathcal{C}_{2,0}$,$\mathcal{C}_{2,1}$,$\mathcal{C}_{2,2}$
&$\mathcal{C}^{4}$&$\mathcal{C}_{1,1}$,$\mathcal{C}_{1,4}$\\
\hline
$\mathcal{C}^{5}$&$\mathcal{C}_{1,1}$,$\mathcal{C}_{1,2}$,$\mathcal{C}_{1,4}$,$\mathcal{C}_{1,5}$
&$\mathcal{C}^{6}$&$\mathcal{C}_{1,1}$,$\mathcal{C}_{1,3}$,$\mathcal{C}_{1,5}$,$\mathcal{C}_{2,0}$,$\mathcal{C}_{2,1}$,$\mathcal{C}_{2,2}$\\
\hline
$\mathcal{C}^{7}$&$\mathcal{C}_{1,1}$,$\mathcal{C}_{1,3}$,$\mathcal{C}_{1,5}$,$\mathcal{C}_{2,0}$,$\mathcal{C}_{2,1}$,$\mathcal{C}_{2,2}$
&$\mathcal{C}^{8}$&$\mathcal{C}_{1,0}$,$\mathcal{C}_{1,1}$,$\mathcal{C}_{1,2}$,$\mathcal{C}_{1,3}$,$\mathcal{C}_{1,4}$,$\mathcal{C}_{1,5}$ \\
\hline
$\mathcal{C}^{9}$&$\mathcal{C}_{1,1}$,$\mathcal{C}_{1,2}$,$\mathcal{C}_{1,4}$
&$\mathcal{C}^{10}$&$\mathcal{C}_{1,1}$,$\mathcal{C}_{1,2}$,$\mathcal{C}_{1,4}$,$\mathcal{C}_{1,5}$\\
\hline
$\mathcal{C}^{11}$&$\mathcal{C}_{1,1}$,$\mathcal{C}_{1,4}$
&$\mathcal{C}^{12}$&$\emptyset$\\
\hline
$\mathcal{C}^{13}$&$\mathcal{C}_{1,1}$,$\mathcal{C}_{1,3}$,$\mathcal{C}_{1,5}$
&$\mathcal{C}^{14}$&$\mathcal{C}_{1,2}$,$\mathcal{C}_{1,4}$,$\mathcal{C}_{1,5}$\\
\hline
$\mathcal{C}^{15}$&$\mathcal{C}_{1,2}$,$\mathcal{C}_{1,3}$,$\mathcal{C}_{1,4}$,$\mathcal{C}_{2,0}$,$\mathcal{C}_{2,1}$,$\mathcal{C}_{2,2}$
&$\mathcal{C}^{16}$&$\mathcal{C}_{1,3}$,$\mathcal{C}_{1,4}$,$\mathcal{C}_{2,0}$,$\mathcal{C}_{2,1}$,$\mathcal{C}_{2,2}$\\
\hline
$\mathcal{C}^{17}$&$\mathcal{C}_{1,1}$,$\mathcal{C}_{1,4}$,$\mathcal{C}_{3,0}$,$\mathcal{C}_{3,1}$
&$\mathcal{C}^{18}$&$\mathcal{C}_{1,3}$,$\mathcal{C}_{2,0}$,$\mathcal{C}_{3,0}$,$\mathcal{C}_{3,1}$\\
\hline
$\mathcal{C}^{19}$&$\mathcal{C}_{1,1}$,$\mathcal{C}_{1,2}$,$\mathcal{C}_{1,3}$,$\mathcal{C}_{1,4}$,$\mathcal{C}_{2,0}$,$\mathcal{C}_{2,1}$
&$\mathcal{C}^{20}$&$\mathcal{C}_{1,2}$,$\mathcal{C}_{1,3}$,$\mathcal{C}_{2,0}$,$\mathcal{C}_{2,1}$,$\mathcal{C}_{3,1}$\\
\hline
$\mathcal{C}^{21}$&$\mathcal{C}_{1,1}$,$\mathcal{C}_{1,2}$,$\mathcal{C}_{1,4}$,$\mathcal{C}_{1,5}$,$\mathcal{C}_{3,0}$,$\mathcal{C}_{3,1}$
&$\mathcal{C}^{22}$&$\mathcal{C}_{1,2}$,$\mathcal{C}_{1,5}$\\
\hline
$\mathcal{C}^{23}$&$\mathcal{C}_{1,1}$,$\mathcal{C}_{1,4}$,$\mathcal{C}_{2,2}$,$\mathcal{C}_{3,0}$,$\mathcal{C}_{3,1}$
&$\mathcal{C}^{24}$&$\mathcal{C}_{1,1}$,$\mathcal{C}_{1,5}$,$\mathcal{C}_{2,1}$\\
\hline
$\mathcal{C}^{25}$&$\mathcal{C}_{1,1}$,$\mathcal{C}_{1,2}$,$\mathcal{C}_{1,4}$,$\mathcal{C}_{1,5}$,$\mathcal{C}_{3,0}$,$\mathcal{C}_{3,1}$
&$\mathcal{C}^{26}$&$\mathcal{C}_{1,1}$,$\mathcal{C}_{1,3}$,$\mathcal{C}_{1,5}$,$\mathcal{C}_{2,0}$,$\mathcal{C}_{2,1}$,$\mathcal{C}_{2,2}$\\
\hline
$\mathcal{C}^{27}$&$\emptyset$
&$\mathcal{C}^{28}$&$\mathcal{C}_{1,1}$,$\mathcal{C}_{1,3}$,$\mathcal{C}_{1,5}$,$\mathcal{C}_{2,0}$,$\mathcal{C}_{2,1}$,$\mathcal{C}_{2,2}$\\
\hline
$\mathcal{C}^{29}$&$\emptyset$
&$\mathcal{C}^{30}$&$\emptyset$\\
\hline
$\mathcal{C}^{31}$&$\emptyset$&&\\
\hline
\end{tabular}
\end{table}
Recall that the local distinguishability remains invariant under local transformations.
The LOCC distinguishable set $\mathcal{S}$ is LU-equivalent to
one of the 27 one-way LOCC distinguishable sets in Table \ref{tab5.1}.
By Lemma 3 in \cite{yuan2022quantum} (or Lemma 1 in \cite{wang2024arx}),
the two conditions in Theorem \ref{th5.1} are invariant under LU-equivalence.
Then $\mathcal{D}etector(\mathcal{S})$ is either not empty or the difference set $\Delta \mathcal{S}$ is $\{(0,3),(3,0),(3,3)\}$.
\end{proof}

\section{Conclusions}
A detector of a given GBS set refers to a maximally commutative set, whose common eigenbasis can be used to locally distinguish the given set. The purpose of this paper is to illustrate how to find all the detectors of a GBS set in a bipartite system and solve the problem of one-way local discrimination based on the detectors. For a given GBS set in $\mathbb{C}^{d}\otimes\mathbb{C}^{d}$, we have used the congruence equations to illustrate how to find all the detectors (composed of GPMs) of a given set. In this way we can determine the local distinguishability for a large number of sets. Furthermore, we have presented four 4-GBS sets without detectors, of which only one is one-way LOCC distinguishable. These 4-GBS sets indicate that the detectors are not necessary for LOCC distinguishable sets,
while most sets without detectors appear to be one-way LOCC indistinguishable.
For 4-GBS sets in $\mathbb{C}^{6}\otimes\mathbb{C}^{6}$,
we have proven that a 4-GBS set is LOCC distinguishable if and only if it has a detector or its difference set is $\{(0,3),(3,0),(3,3)\}$, which indicates that a detector is almost necessary for LOCC distinguishable sets (in the sense of LU-equivalence, there is only one exception).
And then the problem of one-way local discrimination of 4-GBS sets in $\mathbb{C}^{6}\otimes\mathbb{C}^{6}$ is completely resolved.
Inspired by our conclusions, one may naturally ask if the same conclusion applies to 5-GBS sets and 6-GBS sets in $\mathbb{C}^{6}\otimes\mathbb{C}^{6}$, and if there are any other conditions or factors that affect the local distinguishability of the GBS set besides the detectors and special difference sets. Our results may highlight further investigations on such problems of local distinguishability, and give rise to intrinsic connection with the foundation of quantum mechanics and applications in quantum communications  (see \cite{bae2015jpa,barn2009aop,ber2010jmo}).

\section*{Appendix}

\subsection*{A. Proof of Observation 1}\label{appen1}
\begin{proof}
The proof here is inspired by Wang et al. \cite{wang2017qip}.
The four GBS sets in $\mathcal{S}_{1}$ are shown below,
\begin{eqnarray*}
|\Phi_{0,0}\rangle=(I\otimes I)|\Phi\rangle,\ |\Phi_{0,\frac{d}{2}}\rangle=(I\otimes Z^{\frac{d}{2}})|\Phi\rangle,\\
|\Phi_{\frac{d}{2},0}\rangle=(I\otimes X^{\frac{d}{2}})|\Phi\rangle,\ |\Phi_{\frac{d}{2},\frac{d}{2}}\rangle=(I\otimes X^{\frac{d}{2}}Z^{\frac{d}{2}})|\Phi\rangle.
\end{eqnarray*}
Alice employs the projective measurements:
$M_{k}^{\pm} = (|2k-2\rangle\pm|2k-1\rangle)(\langle2k-2|\pm\langle2k-1|)$, $k = 1, 2,..., d/2$. The post-measurement states $(M_{k}^{\pm}\otimes I)|\Phi_{m_{i},n_{i}}\rangle$ are
\begin{eqnarray*}
(|2k-2\rangle\pm|2k-1\rangle)\otimes(|2k-2\rangle\pm|2k-1\rangle),\\
(|2k-2\rangle\pm|2k-1\rangle)\otimes(|2k-2\rangle\mp|2k-1\rangle),\\
(|2k-2\rangle\pm|2k-1\rangle)\otimes(|2k-2+\frac{d}{2}\rangle\pm|2k-1+\frac{d}{2}\rangle),\\
(|2k-2\rangle\pm|2k-1\rangle)\otimes(|2k-2+\frac{d}{2}\rangle\mp|2k-1+\frac{d}{2}\rangle).
\end{eqnarray*}
Obviously, the post-measurement states of Bob$^{,}$s system are orthogonal to each other and Bob can distinguish these states exactly.

Lastly, the difference set $\Delta\mathcal{S}_{1}=\{(0,\frac{d}{2}),(\frac{d}{2},0),(\frac{d}{2},\frac{d}{2})\}$.
 We claim that for each MCS $\mathcal{C}_{ij}\in \mathcal{S}_{MC,d}$,  at least one element in $\Delta (\mathcal{S}_1)$ lies in  this MCS. For the case $i=j=0$, we have $(0, \frac{d}{2}) \in \mathcal{C}_{ij}.$ For the cases $i\neq 0$, one notes that $d=ik$ for some integer $k$.  If $i$ is odd, we have $2 \mid k$. Hence $i\mid \frac{d}{2}$ implies $\frac{d}{2} \in i \mathbb{Z}_d$. If both $i,j$ are odd,  then $$\left|\begin{array}{ccc}
i&j\\
 \frac{d}{2}& \frac{d}{2}
\end{array}\right|= (i-j) \frac{d}{2} \equiv 0 \mod\ d$$
as $(i-j)$ is even. As $\frac{d}{2} \in i \mathbb{Z}_d,$ by the definition of $\mathcal{C}_{ij} $ in Eq. \eqref{eq:Cij} we have $(\frac{d}{2}, \frac{d}{2}) \in \mathcal{C}_{ij}$.  Now if $i$ is odd  and
 $j$  is even,   we have
 $$\left|\begin{array}{ccc}
i&j\\
 \frac{d}{2}& 0
\end{array}\right|= -j \frac{d}{2} \equiv 0 \mod\ d,$$
which implies  that  $(\frac{d}{2},0) \in \mathcal{C}_{ij}$.   If $i$ is even  and
 $j$  is odd,   we have
 $$\left|\begin{array}{ccc}
i&j\\
0&  \frac{d}{2}
\end{array}\right|= i \frac{d}{2} \equiv 0 \mod\ d.
$$
Clearly, $0\in i\mathbb{Z}_d$ and we also have $(0,\frac{d}{2}) \in \mathcal{C}_{ij}$. Therefore, $\mathcal{D}etector(\mathcal{S}_{1})$ is empty. As $\frac{d}{2}= 2\ell+1$, the GPMs $(\frac{d}{2},0)$ and $(\frac{d}{2},\frac{d}{2})$ do not commute. Hence, $\Delta\mathcal{S}_{1}$ is not contained in any MCS.
\end{proof}

\subsection*{B. Proof of Observation 2}\label{appen2}
\begin{proof}
The difference set is given by $\Delta\mathcal{S}_{2}=\{(0,1),(0,2),(0,3),(0,4),(0,5),(3,0),(3,1),(3,3),(3,5)\}$.
From Table \ref{tab3.4}, it can be concluded that $\mathcal{D}etector(\mathcal{S}_{2})$ is empty.

Suppose that $\mathcal{S}_{2}$ is one-way LOCC distinguishable. Then by Lemma \ref{lem4.1}, there exists a normalized vector $|\alpha\rangle=\sum_{j=0}^{d-1}\alpha_{j}|j\rangle$ such that $\{U_{m_{j},n_{j}}|\alpha\rangle\big|(m_{j},n_{j})\in \mathcal{S}_{2}\}$ is an orthonormal set. It means that $\langle\alpha|U_{m_{j},n_{j}}^{\dag}U_{m_{k},n_{k}}|\alpha\rangle=0$, that is,
$\langle\alpha|U_{m,n}|\alpha\rangle=0$ for $(m, n)\in\Delta \mathcal{S}_{2}$.
Let $|\omega_{j}\rangle\triangleq|(1,\omega^{j},\cdots,\omega^{5j})\rangle$ and
$\{|\omega_{j}\rangle\}_{j=0}^{5}$ be a base of $\mathbb{C}^{6}$.
The condition $\Delta S\supseteq\{(0,1),(0,2),(0,3),(0,4),(0,5)\}$ implies that $\langle\alpha|Z^{i}|\alpha\rangle=0$, i.e.,
\begin{eqnarray*}\label{eq1ex4.2}
\langle(|\alpha_{j}|^{2})_{j=0}^{5}|\omega_{i}\rangle=0,\ i=1,\cdots,5.
\end{eqnarray*}
Hence, the vector $|(|\alpha_{j}|^{2})_{j=0}^{5}\rangle$ is proportional to $|(1,\cdots,1)\rangle$ and we can set $|\alpha_{j}|=k> 0$.

The condition $\Delta S\supseteq \{(3,0),(3,1),(3,3),(3,5)\}$ implies that
$0=\langle\alpha|X^{3}Z^{i}|\alpha\rangle=\Sigma_{j=0}^{5}\overline{\alpha}_{j+3}\omega^{ij}\alpha_{j}
=\langle(\overline{\alpha}_{j}\alpha_{j+3})_{j=0}^{5}|\omega_{i}\rangle$, $i=0,1,3,5$.
Let $|(\overline{\alpha}_{j}\alpha_{j+3})_{j=0}^{5}\rangle=a|\omega_{2}\rangle+b|\omega_{4}\rangle$, where $a,\ b$ are two complex numbers. Since $a|\omega_{2}\rangle+b|\omega_{4}\rangle
=|(a+b,a\omega^{2}+b\omega^{4},a\omega^{4}+b\omega^{2},a+b,a\omega^{2}+b\omega^{4},a\omega^{4}+b\omega^{2})\rangle$,
the items $\overline{\alpha}_{j}\alpha_{j+3}$ are  all real numbers.
Note that $|\alpha_{j}|=k> 0$, the items $\overline{\alpha}_{j}\alpha_{j+3}$ are either $k^{2}$ or $-k^{2}$.

In this way we obtain an equation set,
$a+b=k^{2}$ (or $-k^{2}$), $a\omega^{2}+b\omega^{4}=k^{2}$ (or $-k^{2}$) and $a\omega^{4}+b\omega^{2}=k^{2}$ (or $-k^{2}$) for $a$ and $b$. This equation set gives rise to no solutions, since the rank of the corresponding coefficient matrix is 2, which is less than the rank 3 of the augmented matrix. This is a contradiction.
\end{proof}

\subsection*{C. Proof of Observation 3}\label{appen3}
\begin{proof}
The difference set is given by $\Delta\mathcal{S}_{3}=\{(0,2),(0,4),(2,0),(2,2),(2,4),(4,0),(4,2),(4,4)\}$.
From Table \ref{tab3.4}, one concludes that $\mathcal{D}etector(\mathcal{S}_{3})$ is empty.

Suppose that $\mathcal{S}_{3}$ is one-way LOCC distinguishable.
Then by Lemma \ref{lem4.1}, there exists a normalized vector $|\alpha\rangle=\sum_{j=0}^{d-1}\alpha_{j}|j\rangle$
such that $\{U_{m_{j},n_{j}}|\alpha\rangle\big|(m_{j},n_{j})\in \mathcal{S}_{3}\}$ is an orthonormal set, which implies that $\langle\alpha|U_{m,n}|\alpha\rangle=0$ for $(m, n)\in\Delta \mathcal{S}_{2}$. Let $|\omega_{j}\rangle\triangleq|(1,\omega^{j},\cdots,\omega^{5j})\rangle$ and
$\{|\omega_{j}\rangle\}_{j=0}^{5}$ be a base of $\mathbb{C}^{6}$.
The condition $\Delta S\supseteq\{(0,2),(0,4)\}$ implies that
$0=\langle\alpha|Z^{i}|\alpha\rangle=\langle(|\alpha_{j}|^{2})_{j=0}^{d-1}|\omega_{i}\rangle$, $i=2,4.$ Since $\langle(|\alpha_{j}|^{2})_{j=0}^{d-1}|\omega_{2}\rangle
=\langle(|\alpha_{0}|^{2}+|\alpha_{3}|^{2},|\alpha_{1}|^{2}+|\alpha_{4}|^{2},|\alpha_{2}|^{2}+|\alpha_{5}|^{2})|(1,\omega^{2},\omega^{4})\rangle$
and
$\langle(|\alpha_{j}|^{2})_{j=0}^{d-1}|\omega_{4}\rangle
=\langle(|\alpha_{0}|^{2}+|\alpha_{3}|^{2},|\alpha_{1}|^{2}+|\alpha_{4}|^{2},|\alpha_{2}|^{2}+|\alpha_{5}|^{2})|(1,\omega^{4},\omega^{2})\rangle$,
it can be concluded that the vector
$|(|\alpha_{0}|^{2}+|\alpha_{3}|^{2},|\alpha_{1}|^{2}+|\alpha_{4}|^{2},|\alpha_{2}|^{2}+|\alpha_{5}|^{2})\rangle$
is proportional to $|(1,1,1)\rangle$, and we can set $|\alpha_{i}|^{2}+|\alpha_{i+3}|^{2}=k> 0$, $i=0,1,2$.

The condition $\Delta S\supseteq \{(2,0),(2,2),(2,4)\}$ implies that
$0=\langle\alpha|X^{2}Z^{i}|\alpha\rangle=\Sigma_{j=0}^{5}\overline{\alpha}_{j+2}\omega^{ij}\alpha_{j}
=\langle(\overline{\alpha}_{j}\alpha_{j+2})_{j=0}^{5}|\omega_{i}\rangle$, $i=0,2,4$.
Since $\langle(\overline{\alpha}_{j}\alpha_{j+2})_{j=0}^{5}|\omega_{i}\rangle=
\langle(\overline{\alpha}_{0}\alpha_{2}+\overline{\alpha}_{3}\alpha_{5},\overline{\alpha}_{1}\alpha_{3}+\overline{\alpha}_{4}\alpha_{0},
\overline{\alpha}_{2}\alpha_{4}
+\overline{\alpha}_{5}\alpha_{1})|(1,\omega^{2i},\omega^{4i})\rangle=0$, $i=0,2,4,$
it can be concluded that the vector
$|(\overline{\alpha}_{0}\alpha_{2}+\overline{\alpha}_{3}\alpha_{5},\overline{\alpha}_{1}\alpha_{3}+\overline{\alpha}_{4}\alpha_{0},
\overline{\alpha}_{2}\alpha_{4}+\overline{\alpha}_{5}\alpha_{1})\rangle$ is a zero vector.

Therefore, the three non-zero vectors $|(\alpha_{0},\alpha_{3})\rangle$, $|(\alpha_{2},\alpha_{5})\rangle$ and $|(\alpha_{4},\alpha_{1})\rangle$
are mutually orthogonal in $\mathbb{C}^{2}$, which is impossible and $\mathcal{S}_{3}$ is one-way local indistinguishable.
\end{proof}

\subsection*{D. Proof of Lemma \ref{lem4.2}}\label{appen4}
\begin{proof}
Suppose that $\mathcal{S}$ is one-way LOCC distinguishable.
Then there exists a normalized vector $|\alpha\rangle=\sum_{j=0}^{d-1}\alpha_{j}|j\rangle$
such that $\{U_{m_{j},n_{j}}|\alpha\rangle\big|(m_{j},n_{j})\in \mathcal{S}\}$ is a orthonormal set.
It means that $\langle\alpha|U_{m_{kj},n_{kj}}|\alpha\rangle=0$ for $(m_{jk}, n_{jk})\in\Delta \mathcal{S}$.
Hence $\Delta \mathcal{S}\supseteq I_{\frac{d}{2}}$ implies
\begin{eqnarray}\label{}\nonumber
\langle\alpha|U_{\frac{d}{2},i}|\alpha\rangle=0,~i=0,\cdots,\frac{d}{2},
\end{eqnarray}
which is equivalent to
\begin{eqnarray}\label{}\nonumber
\langle\alpha|U_{\frac{d}{2},d-i}|\alpha\rangle=0,i=0,\cdots,\frac{d}{2}.
\end{eqnarray}
Let $|\omega_{j}\rangle\triangleq|(1,\omega^{j},\cdots,\omega^{(d-1)j})\rangle$ and
$\{|\omega_{j}\rangle\}_{j=0}^{d-1}$ be a base of $\mathbb{C}^{d}$.
Then $\langle(\overline{\alpha}_{j}\alpha_{j+\frac{d}{2}})_{j=0}^{d-1}|\omega_{i}\rangle=0$ for $i\in\mathbb{Z}_{d}$.
Therefore, $|(\overline{\alpha}_{j}\alpha_{j+\frac{d}{2}})_{j=0}^{d-1}\rangle$ is a zero vector since $\{|\omega_{i}\rangle\}_{i=0}^{d-1}$ is a base.
This ensures that there is at least $\frac{d}{2}$ $\alpha_{j}=0$,
and we can assume that
\begin{eqnarray}\label{eq1lem3.0}
\alpha_{c_{1}}\neq0,\cdots,\alpha_{c_{n}}\neq0, n\leq \frac{d}{2}.
\end{eqnarray}

Denote $b_{i}\triangleq \langle \omega_{i}|(|\alpha_{j}|^{2})_{j=0}^{d-1}\rangle$
the Fourier coefficient of the vector $|(|\alpha_{j}|^{2})_{j=0}^{d-1}\rangle$ with respect to the base $\{|\omega_{i}\rangle\}_{i=0}^{d-1}$.
Let $W_{d\times d}=[\omega^{ij}]_{i,j=0}^{d-1}$ be the Vandermonde matrix generated by $(1,\omega,\cdots,\omega^{d-1})$.
Then $b_{0}=1$ ($|\alpha\rangle$ is a normalized vector) and
$|(|\alpha_{j}|^{2})_{j=0}^{d-1}\rangle=W_{d\times d}|(b_{j})_{j=0}^{d-1}\rangle$,
i.e.,
\begin{eqnarray}\label{eq2lem3.0}
W^{\dag}_{d\times d}|(|\alpha_{j}|^{2})_{j=0}^{d-1}\rangle=|(b_{j})_{j=0}^{d-1}\rangle.
\end{eqnarray}

Let $0< i_{0}< d-1$ and $\Delta S\supseteq I_{\frac{d}{2}}\cup I_{0}$,
where $I_{0}=\{(0,i_{0}+2k)\}_{k=0}^{\frac{d}{2}-1}$.
$\Delta S\supseteq I_{0}$ implies
$\langle\alpha|U_{0i}|\alpha\rangle=0$, i.e.,
\begin{eqnarray}\label{eq3lem3.0}
\langle(|\alpha_{j}|^{2})_{j=0}^{d-1}|\omega_{i}\rangle=0=b_{i},\
i=i_{0},i_{0}+2,\cdots,i_{0}+2(\frac{d}{2}-1).
\end{eqnarray}
It follows from (\ref{eq1lem3.0})-(\ref{eq3lem3.0}) that
\begin{eqnarray*}\label{eq4lem3.0}
W^{\dag}\begin{pmatrix}
i_{0}&\cdots&i_{0}+2(\frac{d}{2}-1)\\
c_{1}&\cdots&c_{n}
\end{pmatrix}
|(|\alpha_{c_{j}}|^{2})_{j=1}^{n}\rangle=|(0,\ldots,0)\rangle.
\end{eqnarray*}
This means that the homogeneous linear system
$$
W^{\dag}\begin{pmatrix}
i_{0}&\cdots&i_{0}+2(\frac{d}{2}-1)\\
c_{1}&\cdots&c_{n}
\end{pmatrix}X=|(0,\ldots,0)\rangle
$$
has a nonzero solution $|(|\alpha_{c_{j}}|^{2})_{j=1}^{n}\rangle$.
But, since $n \leq\frac{d}{2}$ and
the rank of the coefficient matrix of the homogeneous linear system is $n$, it only has zero solution. This is a contradiction.
\end{proof}

\begin{acknowledgments}
This work is supported by NSFC (Grant No. 11971151, 12371458, 12075159, 12171044, 12371132),
Wuxi University High-level Talent Research Start-up Special Fund,
Applied Basic Research Foundation(Grants No. 2023A1515012074, 2024A1515010380), Fundamental Research Funds for the Central Universities and the specific research fund of the Innovation Platform for Academicians of Hainan Province.
\end{acknowledgments}

\nocite{*}


\end{document}